\newtheorem{theorem}{Theorem}
\newtheorem{definition}{Definition}
\newtheorem{lemma}{Lemma}
\newcommand{\TODO}[1]{\textcolor{red}{\textbf{****** #1 ******}}}
\renewcommand{\paragraph}[1]{\noindent \textbf{#1}\hspace{1em}}
\begin{document}

\title{Bijective Deformations in $\mathbb{R}^n$ via Integral Curve Coordinates}

\author{Lisa~Huynh
        and~Yotam~Gingold
\IEEEcompsocitemizethanks{\IEEEcompsocthanksitem L. Huynh and Y. Gingold are with George Mason University.}
\thanks{}}

\markboth{Bijective Deformations in $\mathbb{R}^n$ via Integral Curve Coordinates}{Bijective Deformations in $\mathbb{R}^n$ via Integral Curve Coordinates}

\IEEEdisplaynotcompsoctitleabstractindextext
\IEEEpeerreviewmaketitle

\IEEEcompsoctitleabstractindextext{%
\begin{abstract}

We introduce Integral Curve Coordinates,
which identify each point in a bounded domain
with a parameter along an integral curve
of the gradient of a function $f$ on that domain;
suitable functions have exactly one
critical point, a maximum,
in the domain,
and the gradient of the function on the boundary points inward.
Because every integral curve intersects the boundary exactly once,
Integral Curve Coordinates provide a natural bijective mapping
from one domain to another given a bijection of the boundary.
Our approach can be applied to shapes in any dimension, provided
that the boundary of the shape (or cage)
is topologically equivalent to an $n$-sphere.
We present a simple algorithm for generating a suitable
function space for $f$ in any dimension.
We demonstrate our approach in 2D
and describe a practical (simple and robust) algorithm for tracing
integral curves on a (piecewise-linear) triangulated regular grid.

\end{abstract}

\begin{keywords}
Deformation, bijection, homeomorphism, coordinates, integral curves
\end{keywords}
}

\maketitle

\section{Introduction}

Shape deformation is a widely studied problem in computer graphics,
with applications in
animation, finite element simulation, parameterization,
interactive modeling, and image editing.
Deformations are useful for 2D and 3D shapes, as well as higher-dimensional shapes.
Deforming the animation of a solid model is in fact a 4D problem, as the animating 3D solid is itself a 4D shape.
Interpolating parameterized modeling spaces, such as the space of human body shapes \cite{allen2003space}, can be an arbitrary dimensional problem.

In the version of the problem that we study, a boundary or ``cage'' is created around a shape. As this cage is manipulated, the interior is also deformed.
The cage may be identical to the shape's boundary, which has one fewer dimension
than the shape itself, and is typically more convenient,
as the cage may be simpler (fewer vertices)
or be free of undesirable properties
(such as a non-manifold mesh or high topological genus).

One important yet elusive property for deformations is bijectivity.
A bijective function is a one-to-one mapping such that every point in the undeformed figure is mapped to a point in the deformed figure and vice-versa.
Bijectivity ensures that the shape never flips ``inside-out'' as a result of deformation.


We propose a technique that creates a guaranteed bijective deformation of the shape given a bijective deformation of its boundary or cage, in any dimension. To do so, we introduce Integral Curve Coordinates, which identify each point in the shape by its position along an integral curve of the gradient of a suitable function $f$.
Suitable functions are Lipschitz continuous,
have only a single critical point, a maximum, in the interior,
and have gradients that point inward along the boundary of the domain (or cage).
For such functions, integral curves never meet unless they are identical,
and every integral curve can be uniquely identified with a point on the boundary.
Given a bijective deformation of the boundary points,
we construct a suitable function $f$ and trace the integral curves through the interior
points to extend the bijective deformation into the interior of the shape.
Our approach can be viewed as a generalization of Xia et al.\ \cite{xia2010parameterization} from star-shaped domains to arbitrary contractible domains.

Our contributions are:
\begin{itemize}
\item Integral Curve Coordinates which naturally produce a guaranteed bijective deformation between two contractible domains in $\mathbb{R}^n$, given a bijective deformation of their boundaries.
\item A simple algorithm to create a scalar function space on a
regularly discretized contractible domain in $\mathbb{R}^n$;
functions in this space have exactly one critical point, a maximum.
\item A practical (simple and robust) algorithm for tracing 2D integral curves on a (piecewise linear) triangulated regular grid.
\end{itemize}

We note that the bijections produced by Integral Curve Coordinates, while correct,
are not fair.
The recent work of Sch{\"u}ller et al.~\cite{schuller2013locally}
and Aigerman and Lipman~\cite{aigerman2013injective}
complements our own by improving the fairness of a deformation in a manner that preserves bijectivity.

\section{Related Work}

Deformation is a well-studied problem in computer graphics.
One approach to shape deformation warps the entire ambient space,
which induces a deformation for the coordinates of the shape \cite{Sederberg:1986:FDS,MacCracken:1996:FDL,lipman2012simple}.
In contrast, intrinsic shape deformation approaches operate
in terms of relative coordinates, without regard for the ambient space \cite{Magnenat-Thalmann:1988:JLD,Alexa:2000:ASI,igarashi2005rigid}.
Cage-based deformations enclose the shape
in a sort of structural boundary (which may simply be the shape's actual boundary).
This boundary, when deformed, induces a deformation of the interior points \cite{joshi2007harmonic,lipman2008green,BenChen:2009:VHM,weber2010controllable,nieto2013cage}.
Several approaches allow a combination of cages, line segments, and sparse points while still computing an intrinsic shape deformation~\cite{Sumner:2007:EDS,Jacobson:2011:BBW}.

One approach to cage or boundary deformation involves
the generalization of barycentric coordinates \cite{wachspress1975rational,floater2003mean,ju2005mean,belyaev2006transfinite,joshi2007harmonic,warren2007barycentric,hormann2008maximum,weber2009complex,manson2010moving,weber2011complex}. Traditionally defined, barycentric coordinates express points in the interior of a simplex (triangle in 2D, tetrahedron in 3D) as the weighted average of the vertices of the simplex. The simplex can be thought of as a simple cage around its interior; the deformations induced by modifying vertices of the simplex are bijective (so long as the simplex remains non-degenerate).
Generalized barycentric coordinates extend this idea to more complex polygons or polyhedra than triangles and tetrahedra, but cannot guarantee bijectivity \cite{jacobson2013algorithms}.

\subsection{Bijectivity}

Few deformation approaches guarantee bijectivity. Bijectivity guarantees that the deformation does not invert (flip inside-out) or collapse any part of a shape.
In two-dimensions, Weber et al.~\cite{weber2009complex} introduced conformal mappings based on complex coordinates.
Xu et al.~\cite{Xu:2011:ETG} introduced a thorough solution to the challenging
problem of determining vertex locations for the interior of a triangular graph,
so that no triangles flip inside-out.
This is akin to solving for a bijective deformation
with the additional constraint that mesh connectivity remains unchanged.

While generalized barycentric coordinates cannot
guarantee bijectivity \cite{jacobson2013algorithms},
Schneider et al.~\cite{schneider2013bijective} recently introduced a technique that splits a deformation into a finite number of steps, each of which is implemented via generalized barycentric coordinates. In practice, the technique creates bijective mappings at pixel accuracy. One limitation, however, is that a continuous non-self-intersecting interpolation between the source and target cages is required, which is an unsolved problem for dimensions greater than two.
A theoretical analysis of this technique has not yet been performed in 3D (or higher).

Lipman's restricted functional space, introduced initially in 2D~\cite{lipman2012bounded} and extended to 3D and higher by Aigerman and Lipman~\cite{aigerman2013injective}
can be used to find bijective deformations of piecewise linear meshes similar to a desired one.
The technique projects an input simplicial map onto the space of bounded-distortion simplicial maps.
While they do not prove that their iterative algorithm convergences in general,
if it does converge, it guarantees a bijective deformation.

The approach of Sch{\"u}ller et al.~\cite{schuller2013locally} prevents inverted elements (i.e. guarantees local injectivity) by augmenting any variational deformation approach with a non-linear penalty term that goes to infinity as elements collapse. The approach therefore requires a continuous deformation from an initial shape, and is incompatible with hard constraints (such as a required target pose).

Several approaches have been presented for creating bijective texture maps with positional constraints \cite{kraevoy2003matchmaker,seo2010constrained,lee2008texture}.

The approaches of Weinkauf et al.\ \cite{Weinkauf:2010:TBS}
and Jacobson et al.\ \cite{Jacobson:2012:SSA} both create restricted function spaces
that prevent undesirable extrema (maxima and minima), but cannot control the placement of saddles. Jacobson et al.\ applied their technique to shape deformation.
Our approach creates a restricted function space with exactly one extrema and no saddles, in any dimension.

The goal of our work is to introduce a cage-based deformation technique that guarantees  bijectivity in any dimension, and provide a practical piecewise linear implementation in 2D.

\section{Background}

\begin{definition}
A function $h\colon X \rightarrow Y$ is called bijective if and only if
it is one-to-one:
\[ h(x_1) = h(x_2) \iff x_1 = x_2 \]
and onto:
\[ \forall y \in Y, \; \exists x \in X \text{ such that } h(x) = y. \]
\end{definition}

Bijective functions always have an inverse.
In the literature, the desired bijective deformations are actually homeomorphisms,
which are bijective functions $h$ such that both $h$ and $h^{-1}$ are continuous.

For deformations, the one-to-one property (injectivity) is the most difficult to achieve.
A lack of injectivity manifests as regions of $X$ that collapse or invert (flip ``inside-out'') when mapped via $h$. For piecewise linear shapes, this corresponds
to collapsed or inverted elements (triangles in 2D, tetrahedra in 3D, etc).
Expressed symbolically, a function $h$ must have a Jacobian whose determinant is everywhere positive to be injective:
\begin{equation}
\label{eq:jacobian_positive}
det( J_h( x ) ) > 0 \quad \forall x \in X
\end{equation}
Where the determinant is negative, $h$ has locally flipped $X$ inside out (inverted).
Where the determinant is zero, $h$ has locally collapsed $X$, though not inverted it.
(In 2D, a triangle collapses to a line or point.)

For piecewise linear shapes, the Jacobian is piecewise constant
within each element and can be constructed as follows.
Consider a $k$-dimensional simplex $P$ (triangle in 2D or tetrahedron in 3D).
$P$ has $k+1$ vertices $v_0,v_1,\hdots,v_k$.
Ignoring the overall translation of $h$, the mapping $h(P)$ can be expressed via matrix multiplication.
Choose a vertex of $P$ arbitrarily ($v_0$, without loss of generality).
The matrix we seek maps $v_i - v_0$ to $h(v_i) - h(v_0)$.
Assuming that $P$ is non-degenerate,
$v_1-v_0,\hdots,v_k-v_0$ are linearly independent.
The matrix $M$ such that $M (v_i-v_0) = h(v_i) - h(v_0)$ is, therefore, the product of
two matrices whose columns are vertices:

\begin{equation}
\label{eq:jacobian_matrix_discrete}
\begin{split}
        M&=[ h(v_1) - h(v_0) | h(v_2) - h(v_0) | \cdots | h(v_k) - h(v_0)]\\
        &\times[ v_1 - v_0 | v_2 - v_0 | \cdots | v_k - v_0 ]^{-1}
\end{split}
\end{equation}

The determinant of $M$ determines whether $h$ is locally injective within $P$.
If the determinant is zero, $h$ has locally collapsed $P$ to, in 2D, a line segment or point.
If the determinant is negative, $h$ has inverted $P$.

\section{Overview}
\label{chapter:overview}

Our goal is to deform the interior of a shape given a deformation of its boundary
or enclosing cage.
Formally, our approach takes as input
\begin{enumerate}
    \item The boundary $\partial D$ of a contractible domain $D$ in $\mathbb{R}^n$.
    \item A homeomorphism $h$ deforming $\partial D$.
\end{enumerate}
and extends $h$ to the interior of $D$.

To accomplish this, we introduce Integral Curve Coordinates (formally defined below) that identify a point $\mathbf{p}$ in the domain $D$
with an integral curve $\mathbf{x}(t)$ and a parameter $t_p$ such that $\mathbf{x}(t_p) = \mathbf{p}$.
The integral curves follow the gradient of a special function $f_D$:
$\frac{d}{dt}\mathbf{x}(t) = \nabla f_D( \mathbf{x}(t) )$.
The function $f_D$ is constructed
such that $f_D$ has exactly one critical point in $D$, a maximum,
and its gradient on the boundary
points inward.
Integral curves of the gradient of a function are sometimes called integral lines \cite{Zom01}.
The integral curves of a Lipschitz continuous function are well-defined
everywhere except critical points, and two integral curves
never meet unless they are identical.
Because every integral curve of $\nabla f_D$ traces a path from a point on the boundary
of the domain to the maximum, we can uniquely identify
every integral curve with the boundary point it passes through.
We denote the integral curve passing through boundary point $\mathbf{b}_m$ as $\mathbf{x}_{\mathbf{b}_m}(t)$.
The Integral Curve Coordinate of a point $\mathbf{p}$ is
\begin{equation}
\label{eq:ICC_def}
    \mathcal{I}_{f_{D}}(\mathbf{p}) =
        \left\{
        \begin{array}{ll}
            ( \mathbf{b}_m, t ) & \text{if } \mathbf{p} \neq \mathbf{p}_0 \\
            \emptyset & \text{if } \mathbf{p} = \mathbf{p}_0
        \end{array}
        \right.
\end{equation}
where $\mathbf{x}_{\mathbf{b}_m}(t) = \mathbf{p}$
and $\mathbf{p}_0$ is the maximum point of $f_D$.

Since distinct integral curves never meet,
$\mathcal{I}$ is a bijection from points in the domain to Integral Curve Coordinates.
Given a bijective deformation $h$ of the boundary of $D$, $h(\partial D) = \partial D'$, we can similarly construct a special function
$f_{D'}$ on $D'$ to create $\mathcal{I}_{f_{D'}}$.
Transforming from $\mathcal{I}_{f_{D}}$ to $\mathcal{I}_{f_{D'}}$ can therefore be achieved by mapping the $b_m$ to $h(b_m)$.

\begin{figure}
\centering
\includegraphics[width=\linewidth]{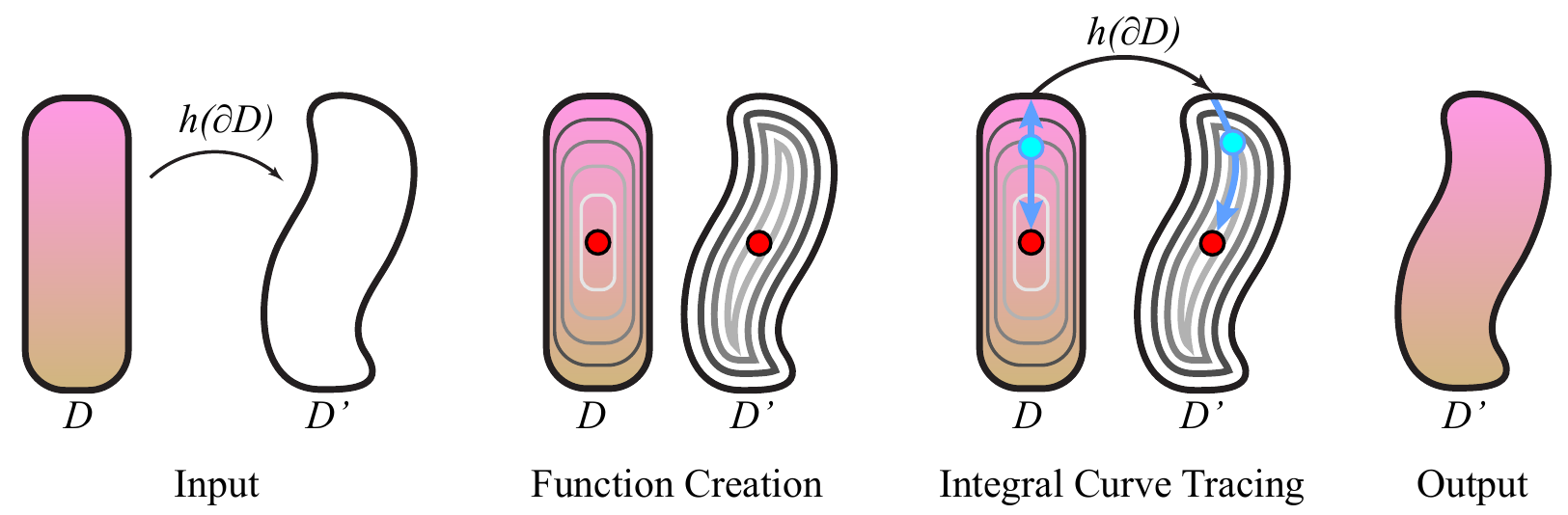}
\caption[Overview of our approach.]{An overview of our approach.
Given a shape $D$ and a homeomorphism of its boundary
$h\colon \partial D \rightarrow \partial D'$,
we first create two functions, one inside $\partial D$ and one inside $\partial D'$,
each with only a single critical point, a maximum.
To find the new position of a point $p \in D$,
we trace the integral curve passing through $p$ up to the maximum
and down to a boundary point $b_m$.
We then trace the integral curve from $h(b_m)$ on the boundary of $D'$ to the maximum.
The new position of $p$ is the point located the same fraction along the integral
curve in $D'$.
Repeating this process everywhere allows us to extend the boundary homeomorphism $h$ to the interior of $\partial D'$.
}
\label{fig:overview}
\end{figure}

To summarize, the steps to extend $h$ to a point $\mathbf{p}$ in the interior of $D$ are as follows (Figure~\ref{fig:overview}):
\begin{enumerate}
\item \label{enum:specialf} Create a function $f_D\colon D \rightarrow \mathbb{R}$ with a single critical point, a maximum, and whose gradients on the boundary point inward.
\item Create a similar function $f_D'$ for $D'$.
\item \label{enum:tracing} Compute the Integral Curve Coordinate $I_p = \mathcal{I}_{f_{D}}(\mathbf{p})$ by tracing the integral curve of $\nabla f_D$ in both directions from $\mathbf{p}$.
\item Transform $I_p$ to $I'_{p'}$ by mapping the boundary point $b_m$ of the Integral Curve Coordinate with $h$ (or, if $p=p_0$, identifying the unique maximum $p_0$ of $f_D$ with the unique maximum $p'_0$ of $f_D'$).
\item Compute $\mathcal{I}^{-1}_{f_{D'}}(I'_{p'})$ by tracing the integral curve of $\nabla f_{D'}$ from the boundary point to the maximum.
\end{enumerate}
(To warp points backwards from $D'$ to points in $D$, swap $D$ with $D'$ and $h$ with $h^{-1}$ in the above steps.)

Our implementation of this approach, described in the following sections,
creates functions on a regular grid discretization of the domain.
We trace integral curves on a piecewise linear interpolation of the function values.
In the piecewise linear setting, gradients are piecewise constant and discontinuous across piece boundaries.
This guarantees monotonic interpolation (no spurious critical points within an element)
and greatly reduces numerical issues in tracing integral lines.
However, it violates the assumption that integral lines never meet,
and can result in regions of the shape collapsing (though not inverting).
In $\mathbb{R}^2$, any of a number of $C^1$-continuous monotonic interpolation
techniques would eliminate this problem \cite{beatson1985monotonicity,floater1998tensor,Schmidt:1992:PMS,Carlson:1989:AMP}.

\section{Function Creation}

The first step in our deformation approach
is the creation of a suitable function
upon which to trace integral curves.
Namely, we require functions with a single critical point, a maximum, and whose gradients on the boundary point inward.
Our discrete implementation creates a function space satisfying the above requirements.
The function space takes the form of a set of inequality relationships on edges of a regular grid enclosing the shape boundary or cage.
We note that harmonic functions, which obey the strong maximum principle,
are suitable in 2D, but not in higher dimensions as they may contain spurious saddle points.

Our approach first chooses a grid vertex to be the maximum (Section~\ref{sec:grassfire})
and then
generates inequality relationships and vertex values such that all other points are regular (Section~\ref{sec:bfs}).
We then trace integral lines on a piecewise linear interpolation of the function values (Section~\ref{sec:integral_lines}).


\subsection{Maximum Selection}
\label{sec:grassfire}

Any grid vertex can be selected to be the maximum. However, because integral lines converge towards a maximum, we wish to choose a maximum vertex in the center of the shape. We find such a point with the grassfire transform~\cite{Blum:1967:TEN}
%
To do so, we construct a regular grid covering the mesh.
The grassfire transform iteratively ``burns away'' the boundary of a shape;
we use it to find the farthest point from the boundary.
Any vertex in the final iteration can be chosen arbitrarily.

\subsection{Assigning Function Values}
\label{sec:bfs}

Once the maximum is chosen, we assign function values to all grid vertices inside
and just enclosing the shape boundary or cage.
To do so, we build a function space containing functions that have a single critical point, a maximum, and gradients on the boundary that point inward.
The function space takes the form of a \emph{cousin tree} of inequality relationships
on edges of the regular grid.


\begin{definition}
Let $T$ be a tree defined on a subset of a regular grid.
We call $T$ a \emph{cousin tree} if
every pair of axis-aligned neighbor vertices in the tree 
are related to each other as parent-child or as tree cousins
(Figure~\ref{fig:tree_relationships}).
Two vertices are tree cousins if they are neighbors in the grid and their tree parents are also neighbors in the grid.
\end{definition}

\begin{figure}[h]
  \centering
  \includegraphics[width=0.5\linewidth]{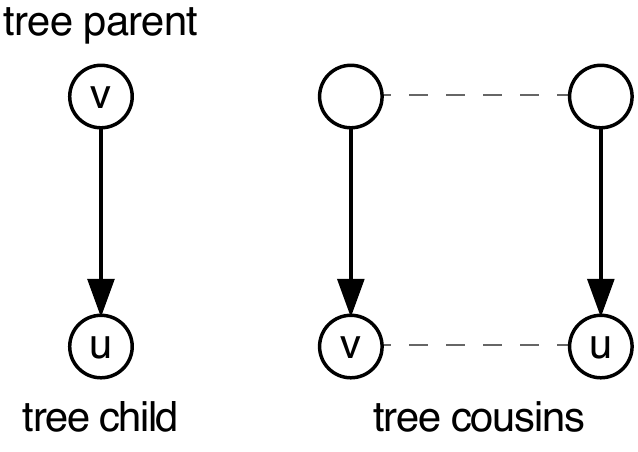}
  \caption[Cousin tree vertex relationships.]{Cousin tree vertex relationships: Two vertices $u,v$ that are axis-aligned neighbors in the grid must be either parent-child (left) or tree cousins (right). Tree cousins' parent vertices must also be axis-aligned neighbors in the grid.}
  \label{fig:tree_relationships}
\end{figure}

%

Our algorithm for creating a cousin tree,
illustrated in Figure~\ref{fig:cousin_tree_illustration},
is a form of breadth-first search (BFS) that,
for all vertices on the frontier in a given iteration,
always expands first along the first coordinate axis,
second along the second coordinate axis, and so on.
(The order of coordinate axes is not important so long as it is consistent across all iterations.) Pseudocode is as follows:
{\small
\begin{algorithmic}[1]
    \Procedure{CreateCousinTree}{$\mathit{maximumVertex}$}
    \State {$\rhd$ Store the tree as a set of directed edges:}
    \State {$\mathit{edges} \leftarrow \{\}$}
    \State {$\mathit{frontier} \leftarrow \{ \mathit{maximumVertex} \}$}
    \State {$\mathit{functionValue} \leftarrow$ empty dictionary}
    \State {$\mathit{functionValue}[ \mathit{maximumVertex} ] \leftarrow 1$ }
    \While {$\mathit{frontier}$ is not empty}
        \State{$\mathit{frontierNext} \leftarrow \{\}$}
        \For {$\mathit{dimension}$ in fixed dimension order}
            \For {$v$ in $\mathit{frontier}$}
                
                \State {$\mathit{candidates} \leftarrow \{$ unvisited neighbors of $v$ along $\mathit{dimension}$ direction $\}$}
                \State {$\mathit{edges} \leftarrow \mathit{edges} \cup {\{(v, n) \mid n \in \mathit{candidates}\}}$}
                
                \State {$\mathit{filtered} \leftarrow \{ n \in \mathit{candidates}$ if $n$ within boundary$\}$}
                
                \State {$\mathit{frontierNext} \leftarrow \mathit{frontierNext} \cup \mathit{filtered}$}
                
                \State {$\rhd$ Optional: directly assign function values:}
                \State {$\mathit{functionValue}[ \mathit{filtered} ] \leftarrow \mathit{functionValue}[ v ] - 1$}
                
            \EndFor 
        \EndFor 
        \State{$\mathit{frontier} \leftarrow \mathit{frontierNext}$}
    \EndWhile 
  \State {return $\mathit{edges}$, $\mathit{functionValue}$}
  \EndProcedure
\end{algorithmic}
}

When a vertex expands, it becomes the tree parent of the vertices it expands into.
In 2D, this means that all potential children along the $x$-axis are connected to the tree in one round. The next round, all potential children along the $y$-axis are connected to the tree, if they have not already been included.
While vertices outside the boundary are not added to the frontier
of the breadth-first search,
they are added to the tree and considered children of all adjacent vertices within the domain;
this ensures that vertices outside the boundary have smaller values than vertices inside,
which creates inward-pointing gradients at the boundary.
See the supplemental materials for a proof that this algorithm indeed creates a cousin tree.

\begin{figure}[h]
  \centering
  \includegraphics[width=1.0\linewidth]{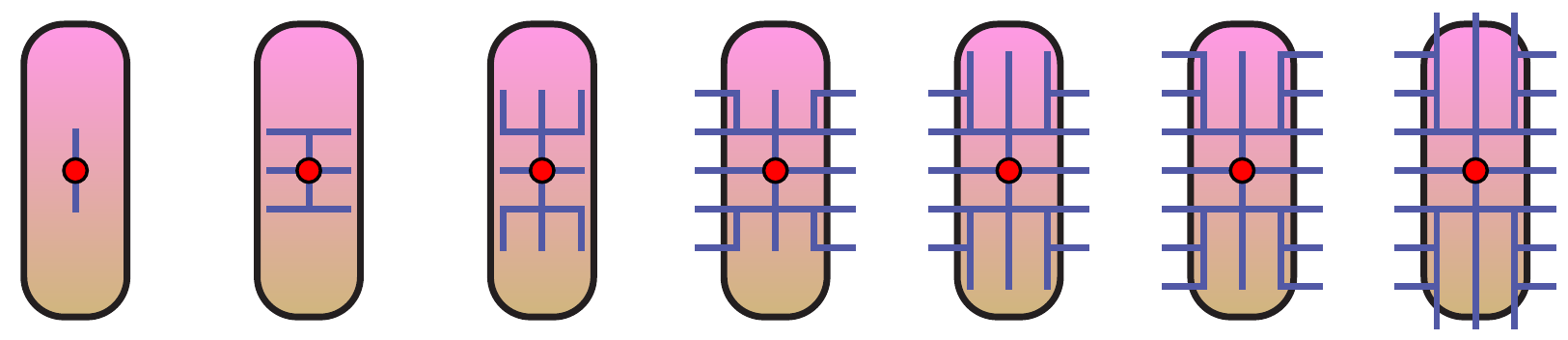}
  \caption[Cousin tree creation.]{A cousin tree is created via breadth-first search
  expanding outwards from a maximum (red circle).}
  \label{fig:cousin_tree_illustration}
\end{figure}

The cousin tree represents our function space;
the edge from a parent to a child vertex represents the inequality $f(parent) > f(child)$.
Note that the Manhattan distance lies within this function space,
as breadth-first search can be used to directly compute the graph distance to a given
node.
See the appendix for a proof that this function space contains no critical points
other than the maximum.

One the cousin tree is created,
we can assign values to grid vertices such that the tree parent always
has greater value than the tree child.
We have experimented with various approaches to assigning function values:
directly assigning the $L_1$ (tree) distance,
and solutions to the Laplace and bi-Laplace equations subject to various
boundary conditions and the cousin tree inequality constraints.
These approaches are evaluated in Section~\ref{sec:evaluation}.



\section{Tracing Integral Lines}
\label{sec:integral_lines}

With a suitable function $f_D$ in hand, we are ready to trace integral lines
to convert a point $\mathbf{p}$ in the interior of the shape to its Integral Curve Coordinate $\mathcal{I}_{f_{D}}(\mathbf{p}) = ( \mathbf{b}_m, t )$,
where $\mathbf{b}_m$ is the point on the boundary reached
by tracing the integral line through $\mathbf{p}$ in the decreasing direction
(gradient descent), and $t$ is the fraction of arc-length along the integral line
from $\mathbf{b}_m$ to the maximum, reached by gradient ascent
from either $\mathbf{p}$ or $\mathbf{b}_m$.
(Note that given a piecewise linear boundary or cage, we store $\mathbf{b}_m$
in barycentric coordinates with respect to the boundary piece it belongs to.)

Since our function values are provided on a regular grid, we require monotonic
interpolation of the values within grid cells,
so that spurious critical points are not introduced by the interpolation.
For our 2D implementation, we perform piecewise linear interpolation via a regular
triangulation that divides each grid cell into two triangles.
Notably, piecewise linear interpolation is monotonic and has constant gradient,
which mitigates problems resulting from numerical accuracy when tracing integral
lines and simplifies arc-length parameterization.

Because the gradient is constant within each triangle, one could trace the integral line
from a given starting point by simply computing the intersection of the ray
from the starting point in the gradient direction with the boundary of the triangle.
However, naive implementation of this numerical approach may create
discrepancies and inconsistencies such as loops
due to limitations in floating point precision.
In order to obviate numerical precision issues, we separate the topological calculation
(which edge of the triangle the integral line passes through)
from the geometric calculation (the coordinates of the point on the edge).
Our topological calculations are based on
comparing values and computing the signs of cross products,
which are amenable to symbolic perturbation
schemes~\cite{edelsbrunner1990simulation,Yap:GCT:1990,Yap:STG:1990}
for robust, exact evaluation.\footnote{We did not implement such a scheme,
as we did not encounter numerical issues with
our floating point implementation.}

Our algorithm traces an integral line by iteratively computing
the sequence of triangle edges it intersects (and points on those edges).
As a special, initial case, the integral line may originate from a point inside a triangle,
but thereafter will be tracked via the triangle edges it intersects.
Our algorithm proceeds as follows,
and is illustrated in Figure~\ref{fig:pathfronts}.
%
Given a point $p_1$ located on the edge $e_{B_3}$ of a triangle $B$, we determine the next edge
by computing the sign of the cross product between (a) the gradient of the function
within $B$ and (b) the vector from $p_1$ towards the vertex of the triangle
opposite the edge $e_{B_3}$ (the dotted red line in Figure~\ref{fig:pathfronts}, right).
The sign of the cross product of these two vectors determines the outgoing edge of the integral line
(Table~\ref{table:edge_sign} with edge labeling given by
Figure~\ref{fig:regular_triangles}).
This robustly and stably computes the outgoing edge.
The point on the outgoing edge ($p_2$ in Figure~\ref{fig:pathfronts}, right)
is then determined via standard line/line intersection computation
(which is made simple since triangles edges are aligned with grid edges or diagonal).
Tracing proceeds in the triangle on the other side of the outgoing edge.

In the special, initial case of an integral line originating at a point $p_0$ 
inside a triangle $A$, the sign of the cross product of the gradient is computed with each of
the three vectors from the point to the triangle's vertices
(Figure~\ref{fig:pathfronts}, left).
The sign of the cross products determines which vectors
the gradient is to the left and right of, which indicates the outgoing edge accordingly.
The point on the edge can then be computed numerically, and the general case of the
algorithm proceeds.

\begin{figure}[h]
  \centering

	\includegraphics[width=.4\linewidth]{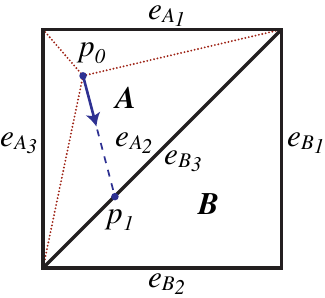}
	\hfill
	\includegraphics[width=.4\linewidth]{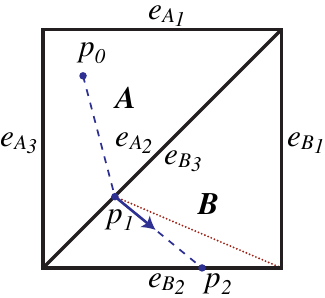}

  \caption[Fetching the next location.]{An integral line is traced (left) from $p_0$ inside triangle $\textbf{A}$.
  The gradient direction (the purple arrow) is compared to the
  red dotted lines, indicating which edge of $\textbf{A}$
  is intersected. Line/line intersection tells us
  the numerical location of the next point $p_1$ lying on a triangle edge,
  which is the general case.
  Tracing then proceeds inside the opposite
  triangle $\textbf{B}$ (right), and the gradient direction
  only needs to be compared to one dotted red line.}
  \label{fig:pathfronts}
\end{figure}

\begin{figure}[h]
  \centering  
	\includegraphics[width=1.0\linewidth]{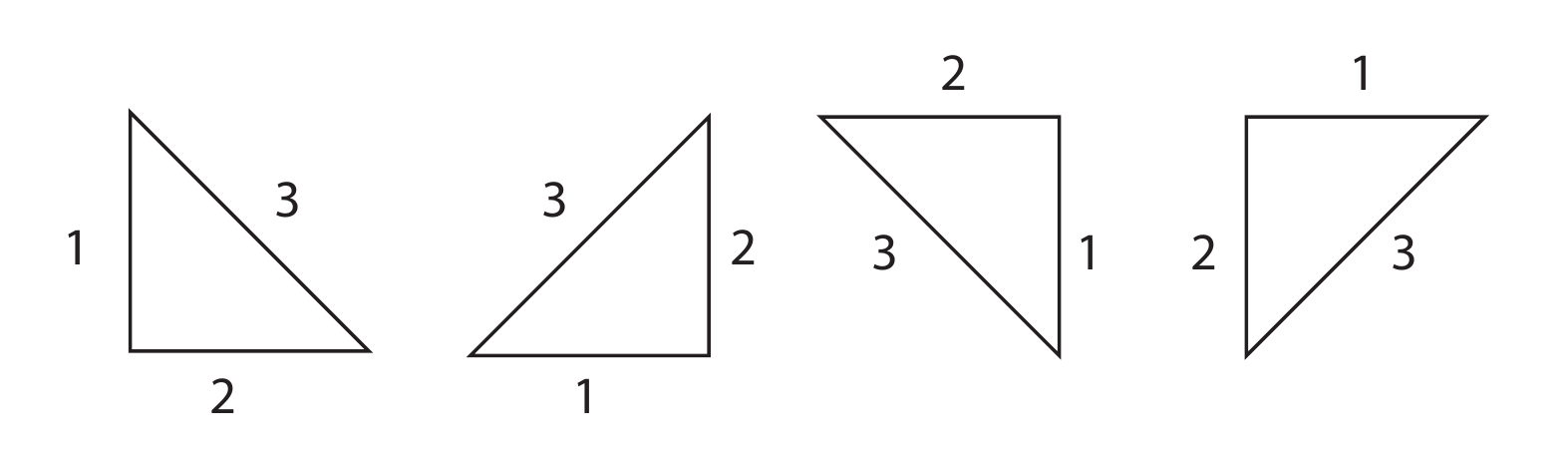}
  \caption[Base triangles.]{The edge labeling for triangles in the regular grid, such that the lookup table in Table~\ref{table:edge_sign} can be used to compute the outgoing edge when tracing integral curves.}
  \label{fig:regular_triangles}
\end{figure}

Finally, because our piecewise linear domain is not $C^1$, integral lines may meet.
This manifests in our integral line tracing algorithm as a triangle whose gradient points
backwards against the incoming edge. We call such edges \emph{compression edges}.
Without special care, the integral line would
zig-zag or staircase between the two triangles.
We detect this by testing whether the triangle's gradient points towards the
incoming edge (e.g. right back out of the triangle).
If so, then tracing
return to the face opposite the incoming edge,
and the next point on the integral line is the endpoint of the edge
pointed towards by the gradient.
Because our triangles triangulate a regular grid,
the dot product of the gradient with the edge itself is
a simple sign test or comparison between components of the gradient.
This procedure simulates the integral curve bouncing between
the two triangles as their gradients merge, skipping the intermediate steps.

\begin{table}
  \centering
  \begin{tabular}{ccc}
    \toprule
    Edge & Sign & Next Edge \\
    \midrule
    1 & + & 3 \\
    1 & - &  2 \\
    2 & + &  1 \\
    2 & - &  3 \\
    3 & + &  2 \\
    3 & - & 1 \\
    \bottomrule
  \end{tabular}

\caption{Our algorithm for tracing integral lines jumps from one triangle edge to the next (Figure~\ref{fig:pathfronts}, right).
With edges labeled according to Figure~\ref{fig:regular_triangles},
the next edge is determined by the sign of the cross product between the function gradient and the vector towards the triangle's opposite corner.}
\label{table:edge_sign}
\vspace{-1em}
\end{table}

\section{Evaluation}
\label{sec:results}
\label{sec:evaluation}

We experimented with various functions for Integral Curve Coordinates.
Figure~\ref{fig:functions} depicts the computed function, integral lines, and deformations
for the following functions subject to cousin tree constraints:
$L_1$ (tree) distance and
solutions to the Laplace and bi-Laplace equation with boundaries equal to zero;
and solutions to the Laplace equation with boundaries equal to zero,
without constraints imposed by the cousin tree.
%
We wish for integral curves to be maximally separated.
Solutions to the Laplace equation when boundary values are fixed to zero
result in integral curves that intersect the boundary orthogonally,
resulting in the most separation between integral lines.
There is little difference between solutions to the Laplace equation
with and without the cousin tree constraints.

\begin{figure}
\centering
\small

\begin{tabular}{@{}c@{\hspace{1mm}}c@{}c@{}c@{}c@{}}
undeformed & \parbox[b]{.19\linewidth}{$L_1$ (tree)\\distance} & \parbox[b]{.19\linewidth}{bi-Laplace\\equation\\with\\cousin tree\\constraints} & \parbox[b]{.19\linewidth}{Laplace\\equation\\with\\cousin tree\\constraints} & \parbox[b]{.19\linewidth}{Laplace\\equation\\without\\cousin tree\\constraints} \\
\midrule
\includegraphics[width=.19\linewidth]{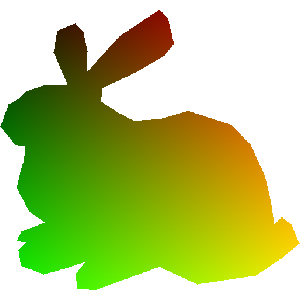}
&
\includegraphics[width=.19\linewidth]{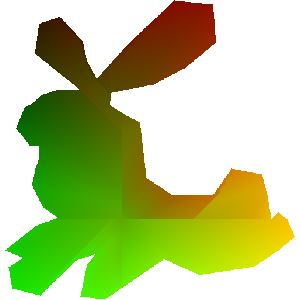}
&
\includegraphics[width=.19\linewidth]{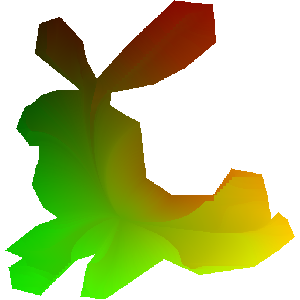}
&
\includegraphics[width=.19\linewidth]{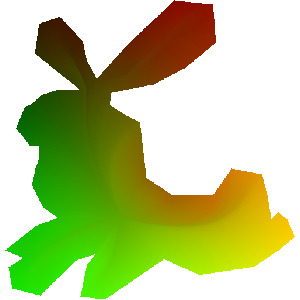}
&
\includegraphics[width=.19\linewidth]{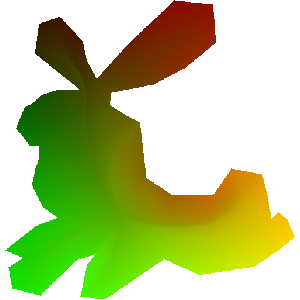}
\\
\includegraphics[width=.19\linewidth]{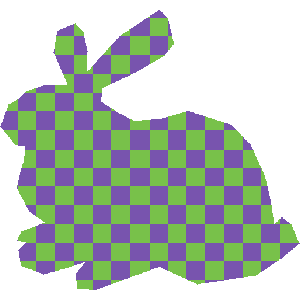}
&
\includegraphics[width=.19\linewidth]{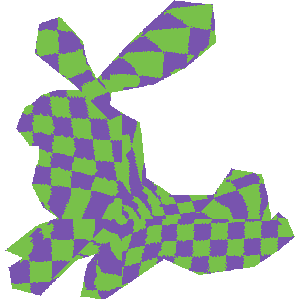}
&
\includegraphics[width=.19\linewidth]{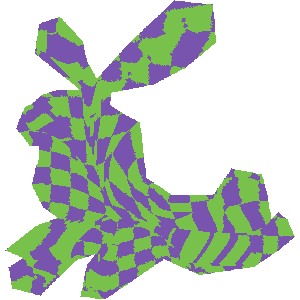}
&
\includegraphics[width=.19\linewidth]{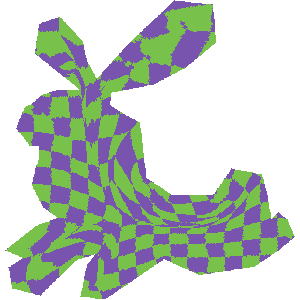}
&
\includegraphics[width=.19\linewidth]{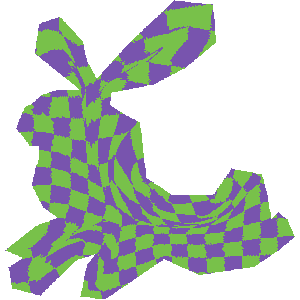}
\\
 &
\includegraphics[width=.19\linewidth]{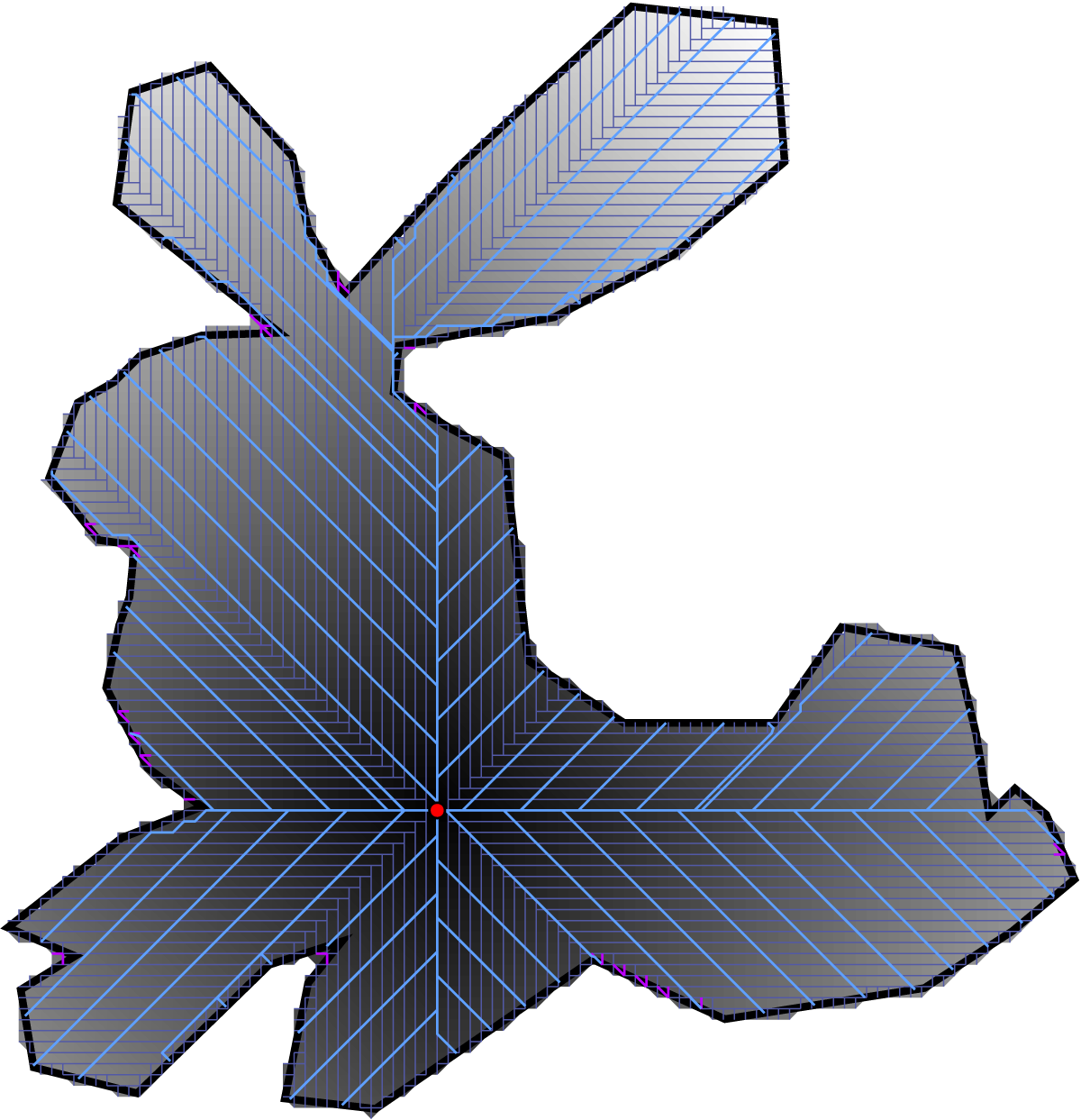}
&
\includegraphics[width=.19\linewidth]{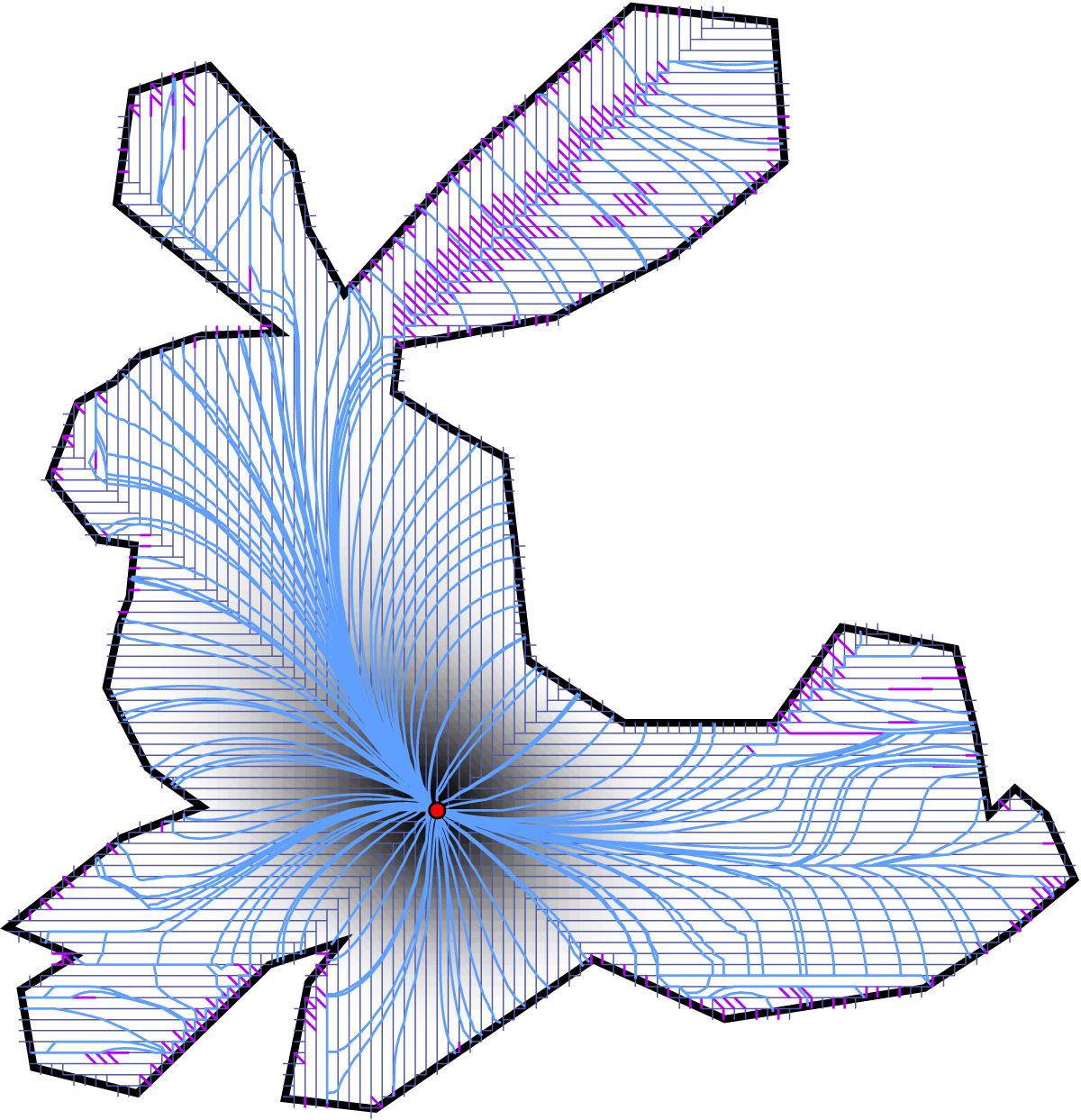}
&
\includegraphics[width=.19\linewidth]{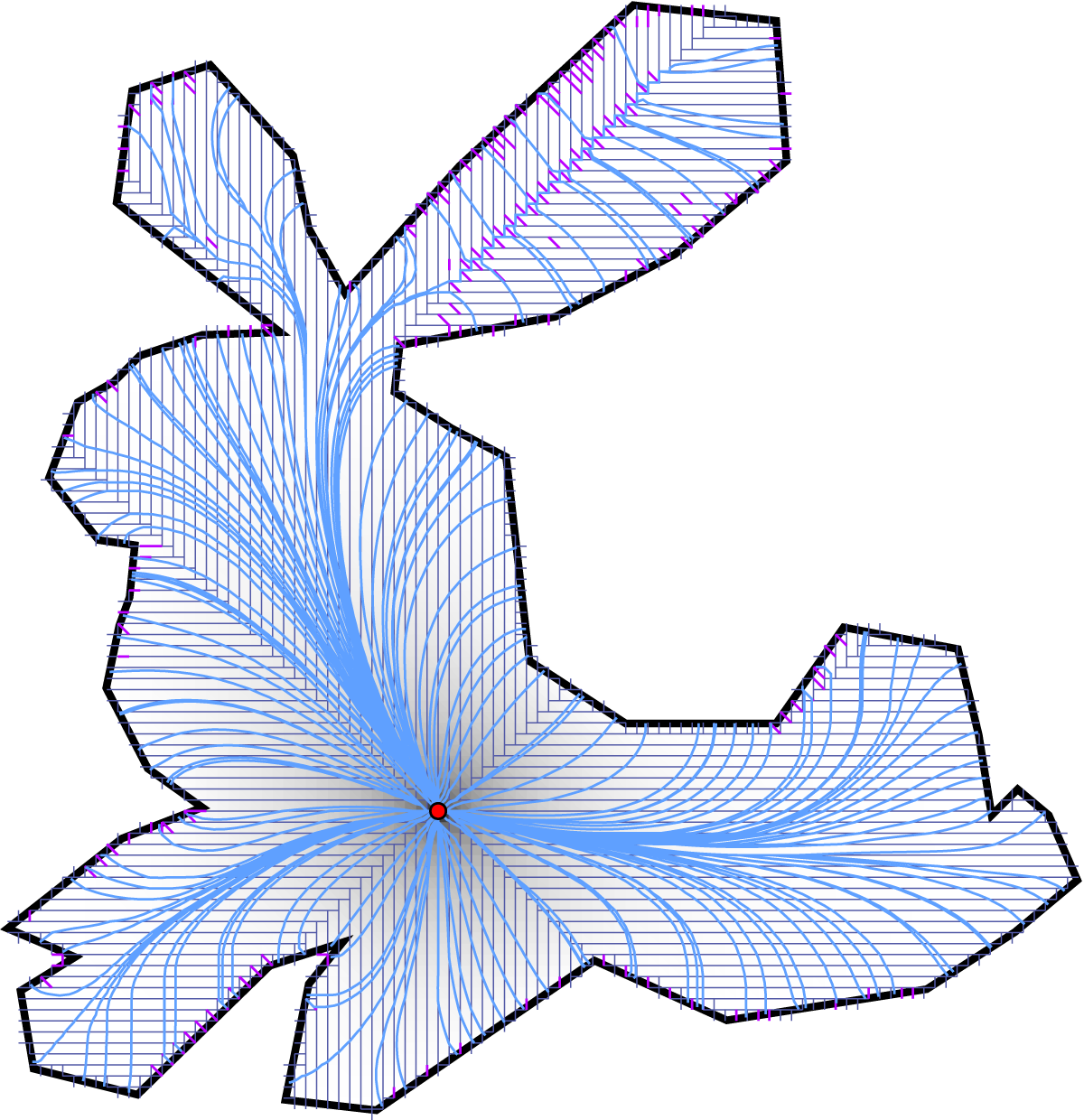}
&
\includegraphics[width=.19\linewidth]{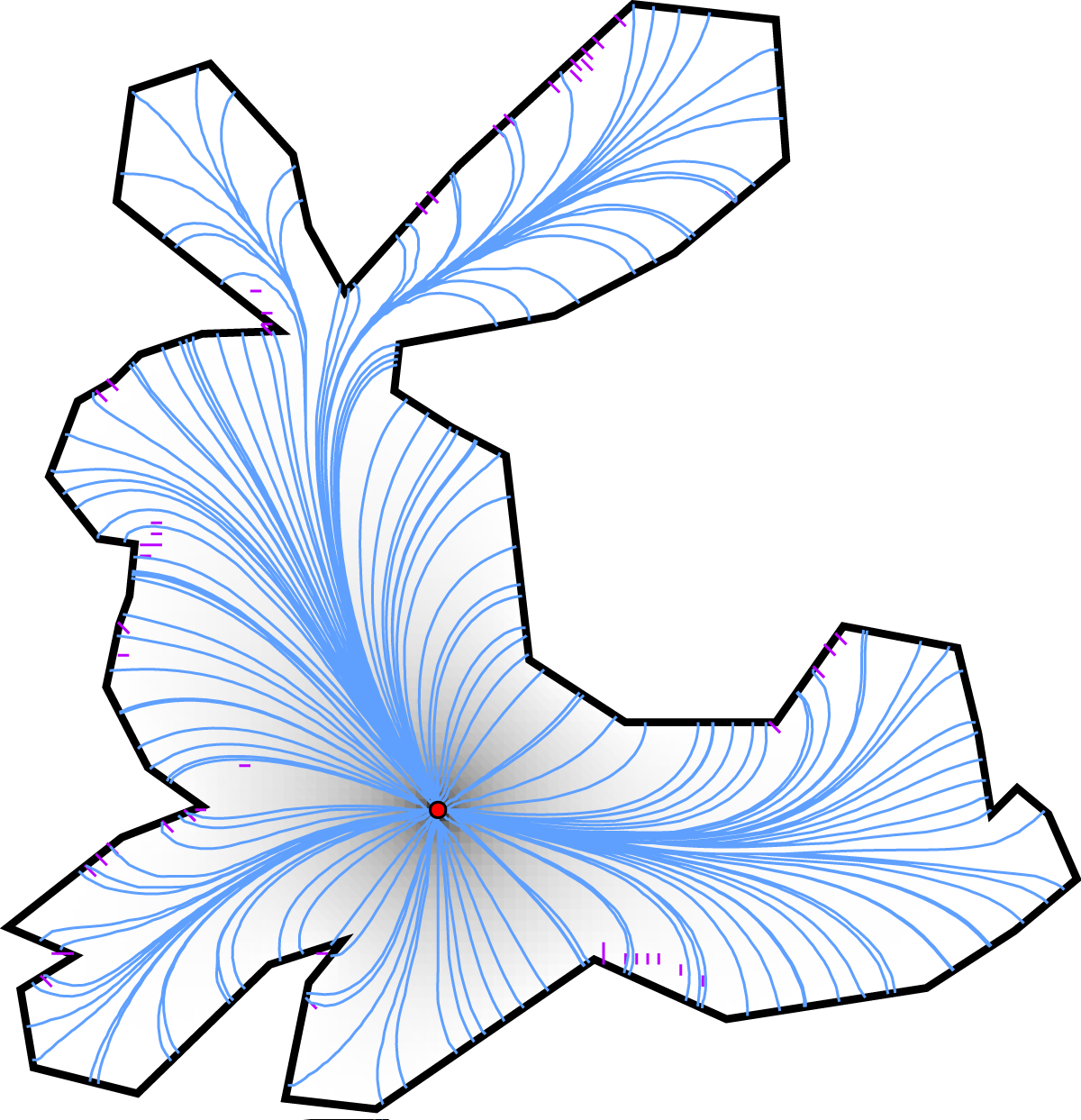}
\\
\includegraphics[width=.19\linewidth]{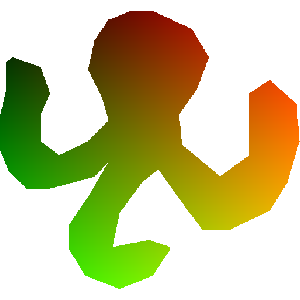}
&
\includegraphics[width=.19\linewidth]{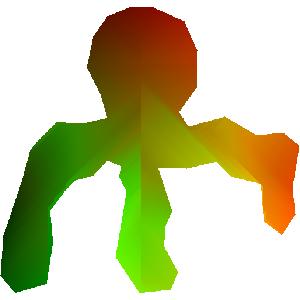}
&
\includegraphics[width=.19\linewidth]{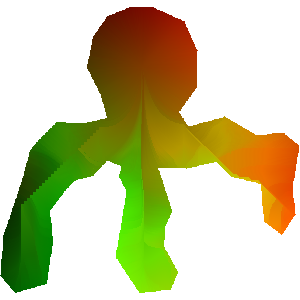}
&
\includegraphics[width=.19\linewidth]{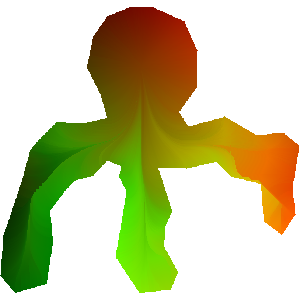}
&
\includegraphics[width=.19\linewidth]{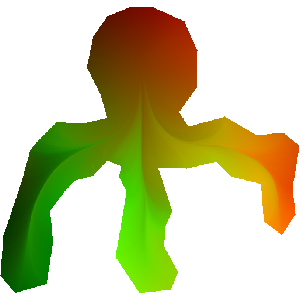}
\\
\includegraphics[width=.19\linewidth]{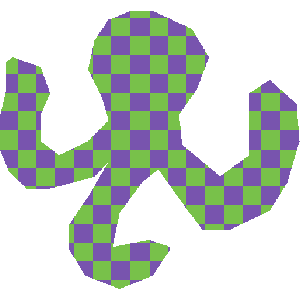}
&
\includegraphics[width=.19\linewidth]{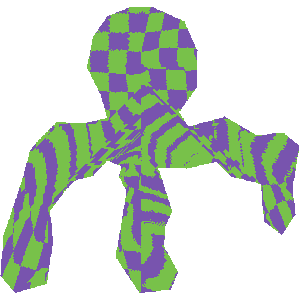}
&
\includegraphics[width=.19\linewidth]{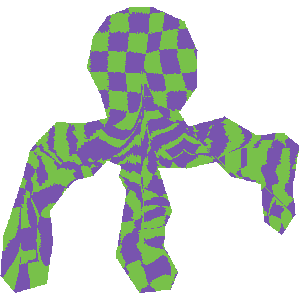}
&
\includegraphics[width=.19\linewidth]{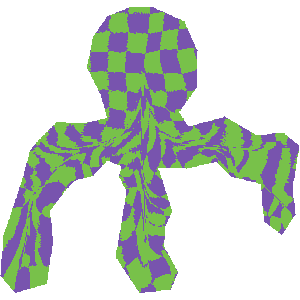}
&
\includegraphics[width=.19\linewidth]{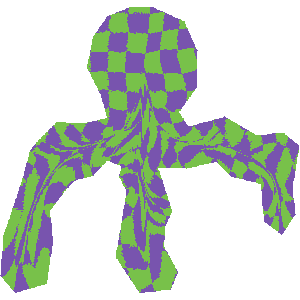}
\\
 &
\includegraphics[width=.19\linewidth]{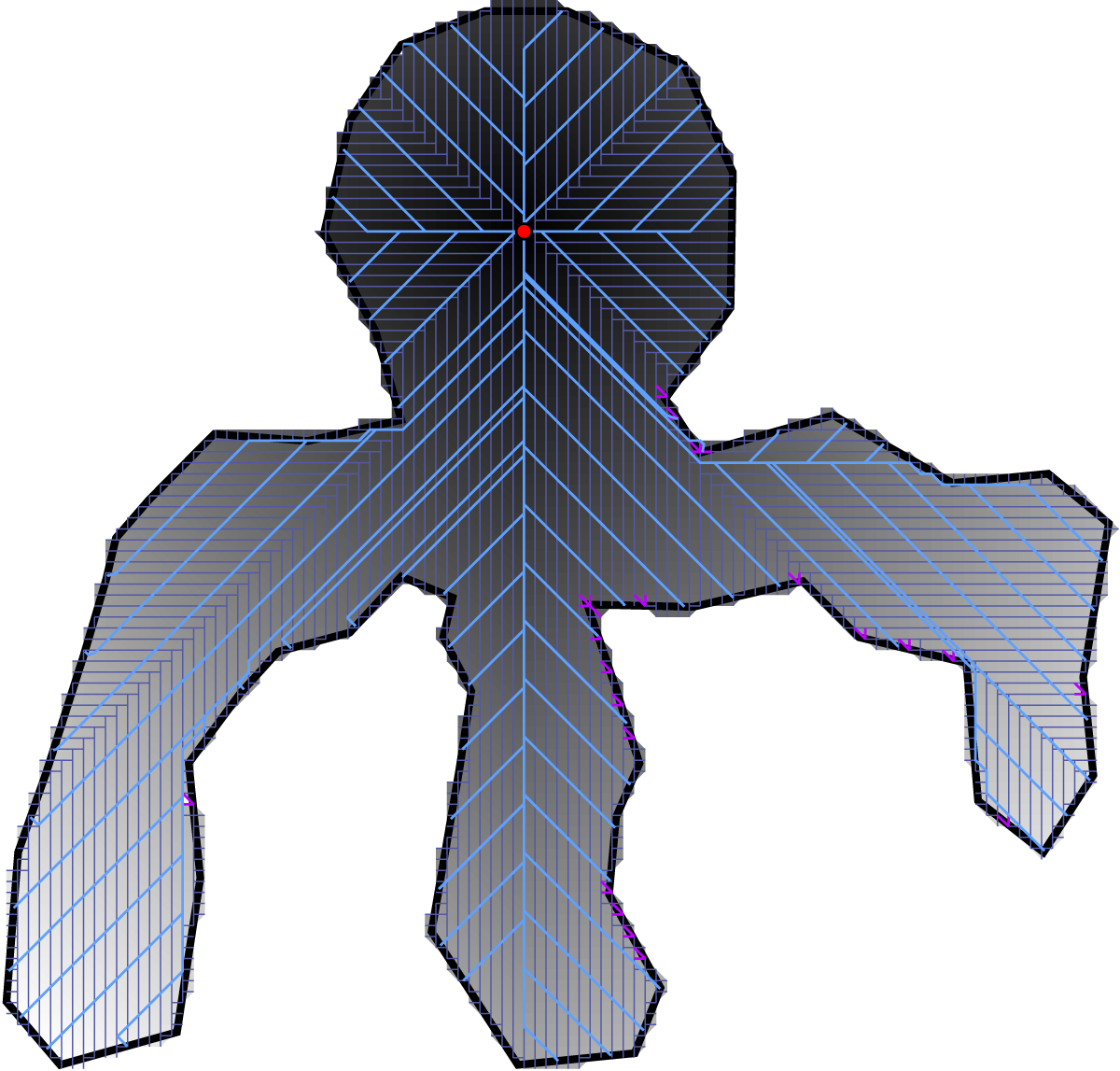}
&
\includegraphics[width=.19\linewidth]{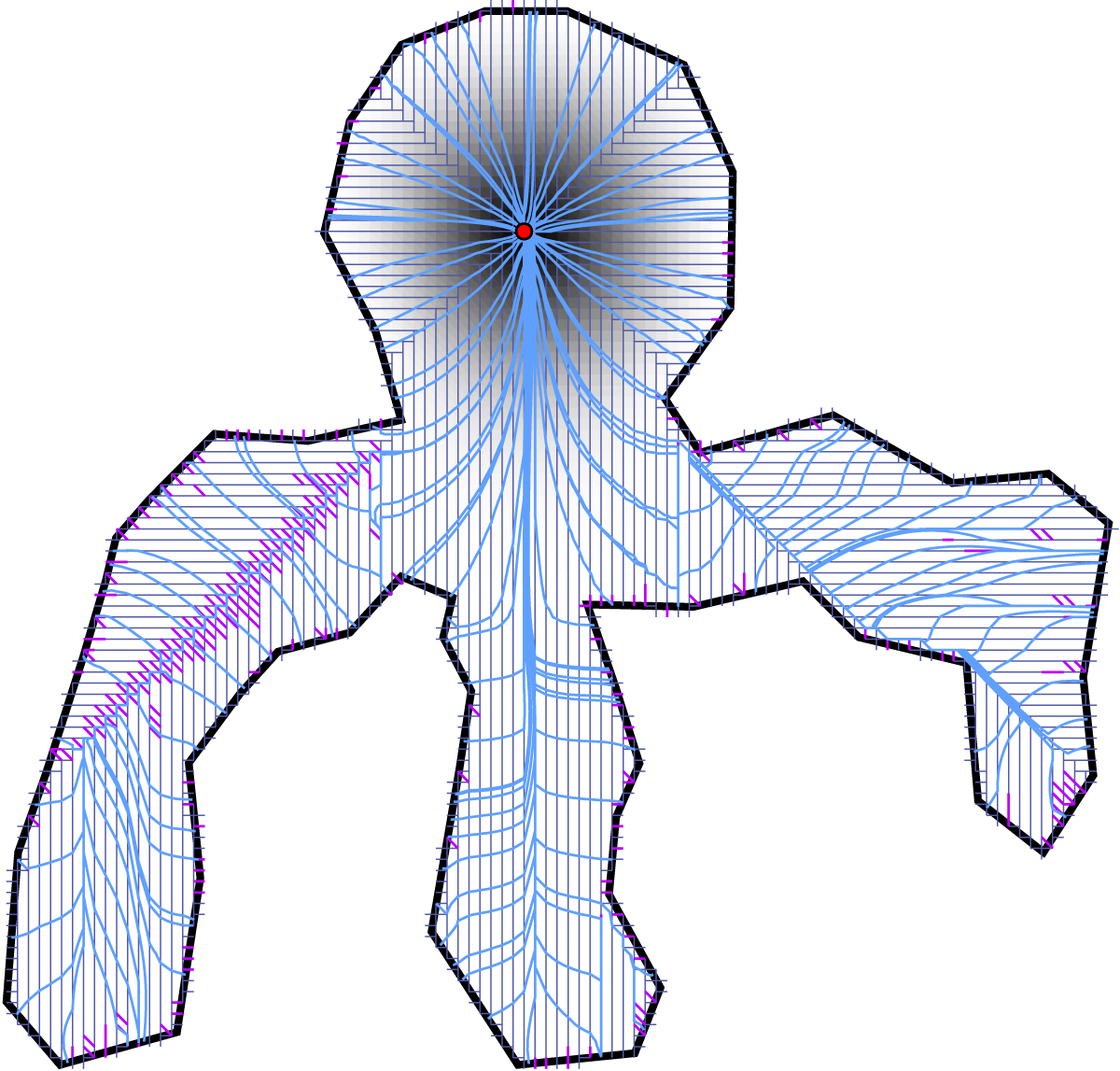}
&
\includegraphics[width=.19\linewidth]{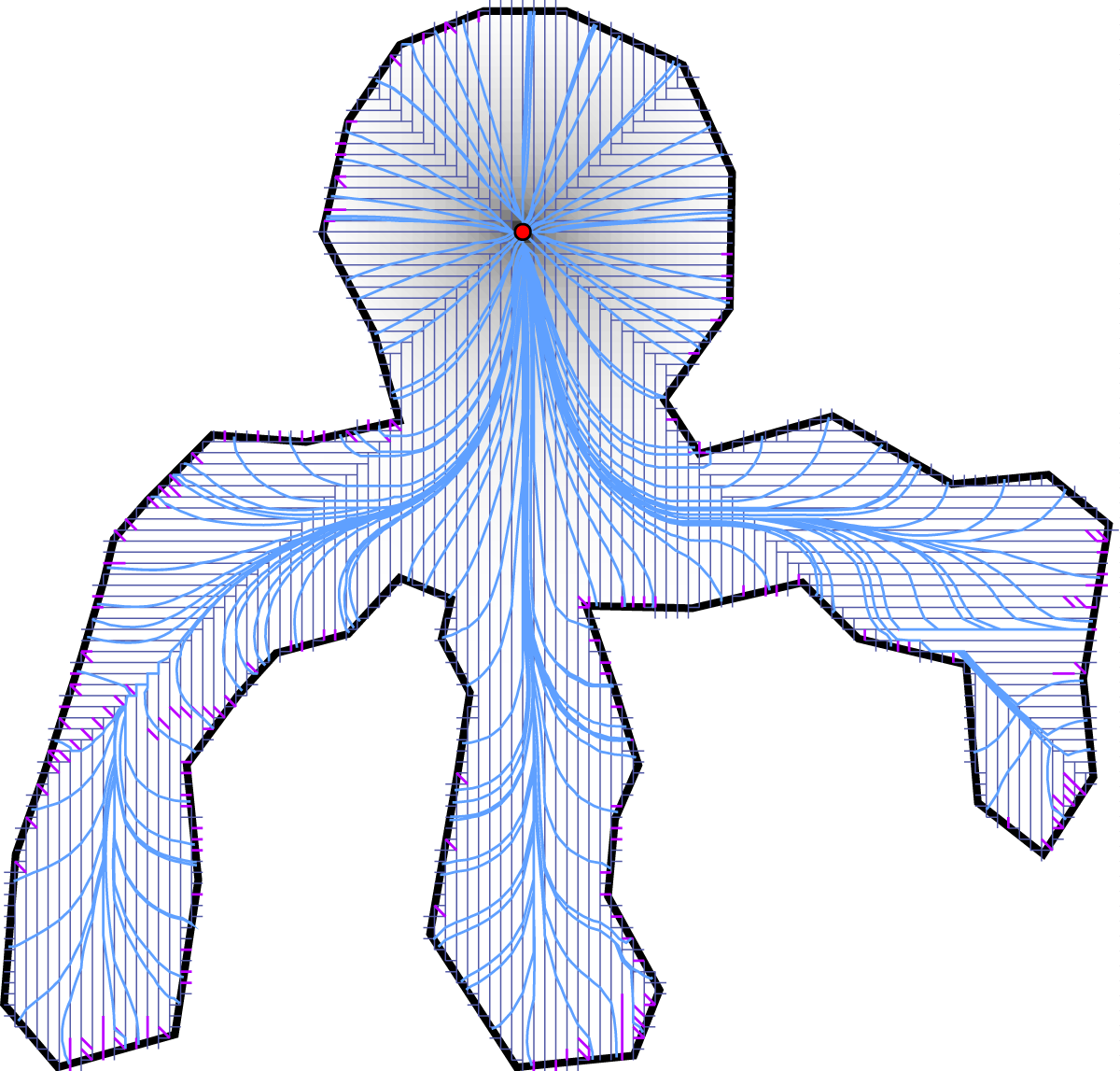}
&
\includegraphics[width=.19\linewidth]{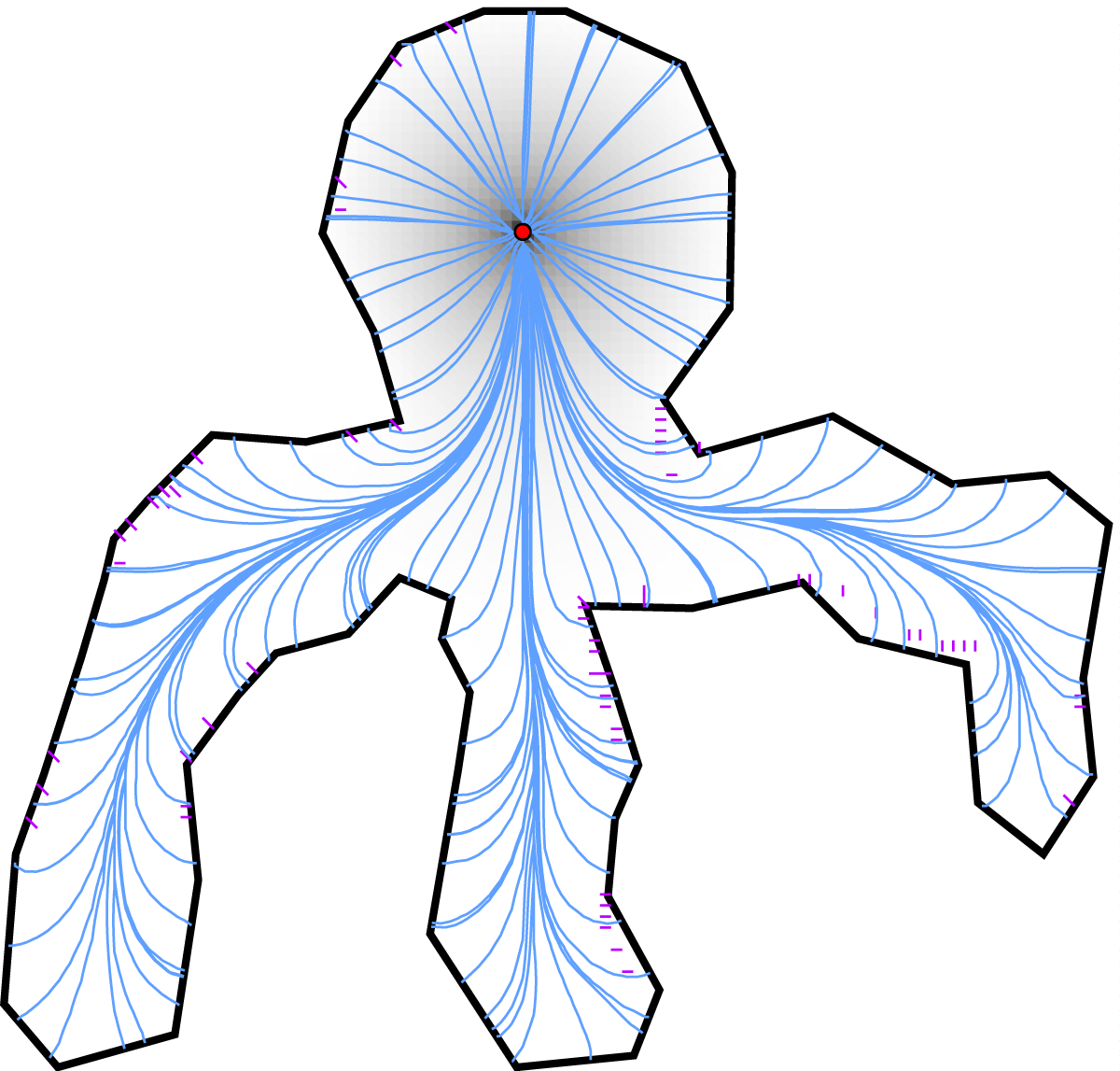}
\\
\includegraphics[width=.19\linewidth]{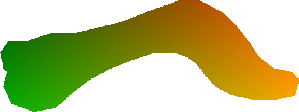}
&
\includegraphics[width=.19\linewidth]{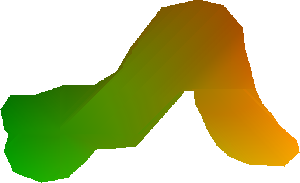}
&
\includegraphics[width=.19\linewidth]{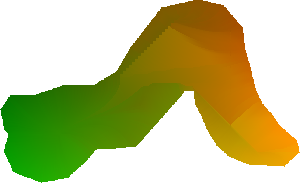}
&
\includegraphics[width=.19\linewidth]{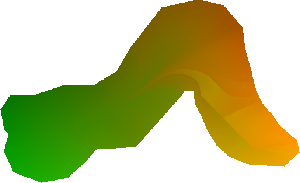}
&
\includegraphics[width=.19\linewidth]{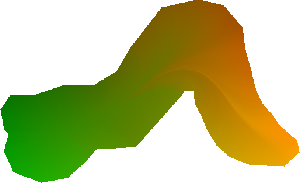}
\\
\includegraphics[width=.19\linewidth]{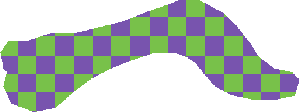}
&
\includegraphics[width=.19\linewidth]{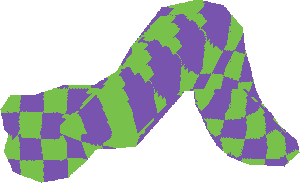}
&
\includegraphics[width=.19\linewidth]{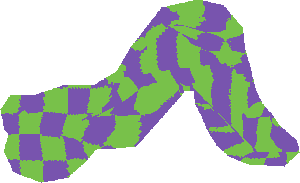}
&
\includegraphics[width=.19\linewidth]{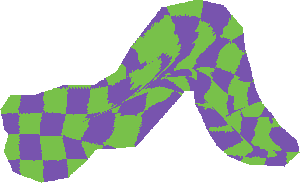}
&
\includegraphics[width=.19\linewidth]{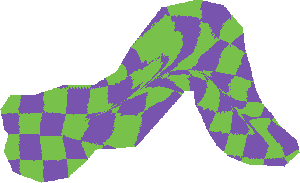}
\\
 &
\includegraphics[width=.19\linewidth]{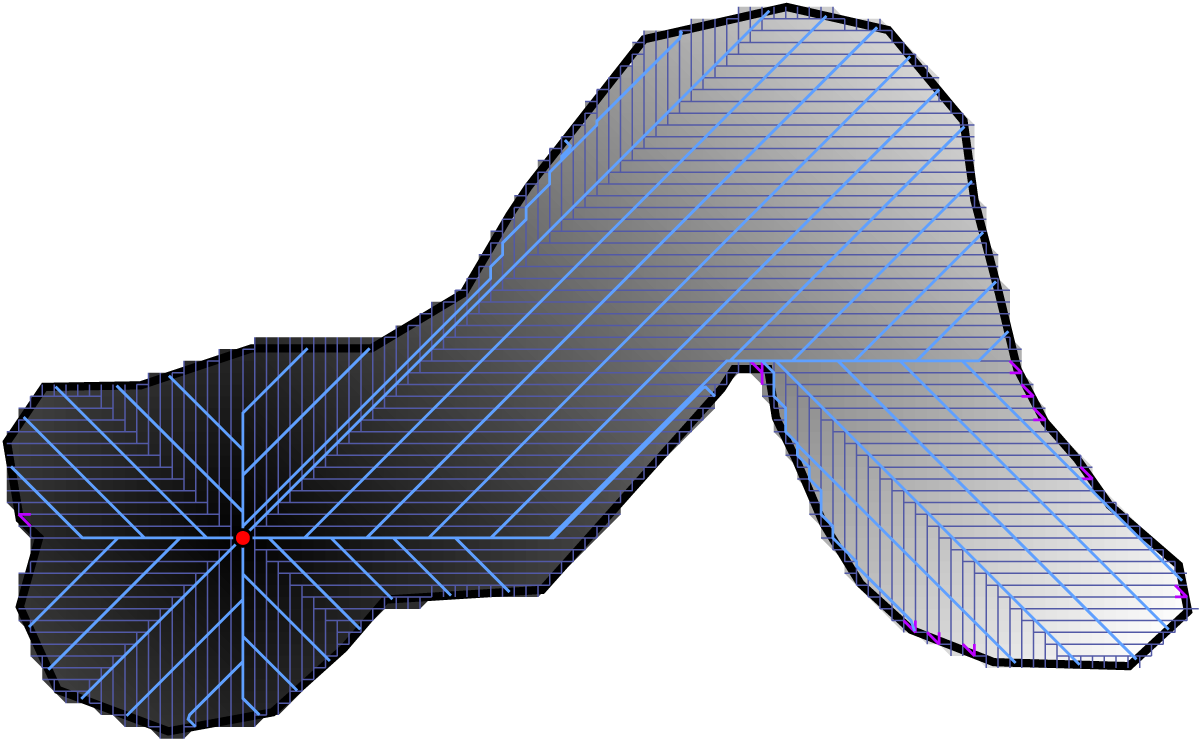}
&
\includegraphics[width=.19\linewidth]{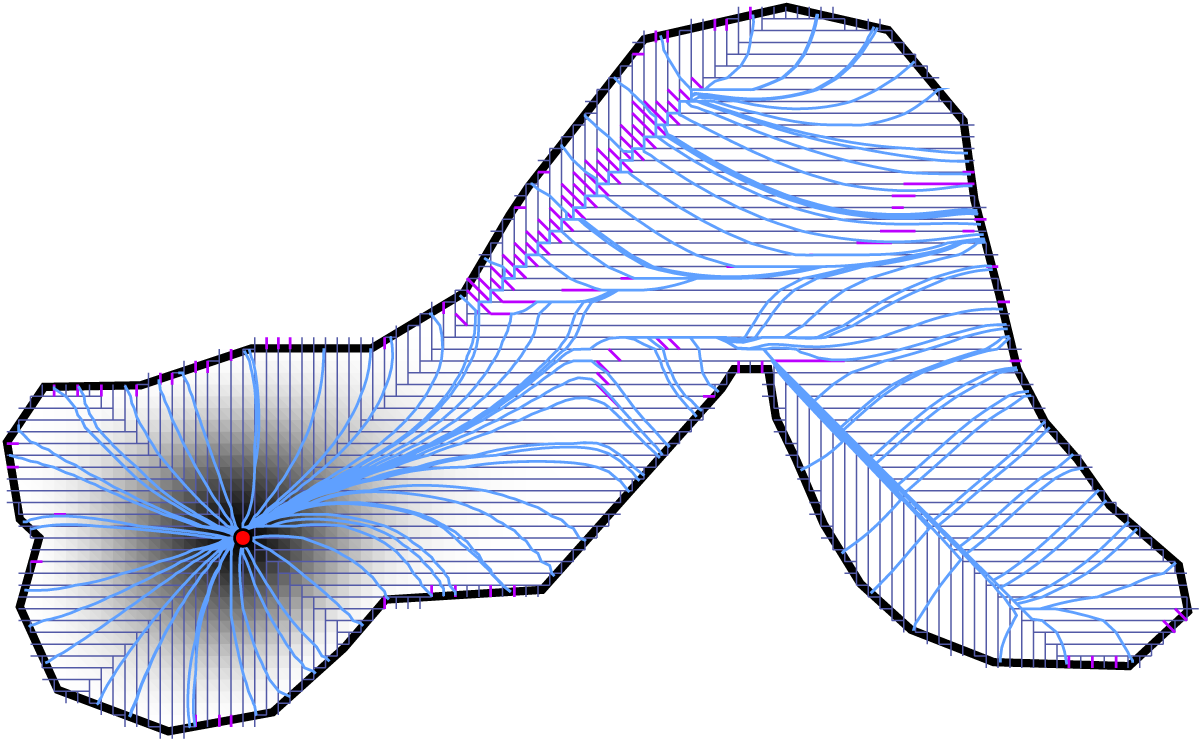}
&
\includegraphics[width=.19\linewidth]{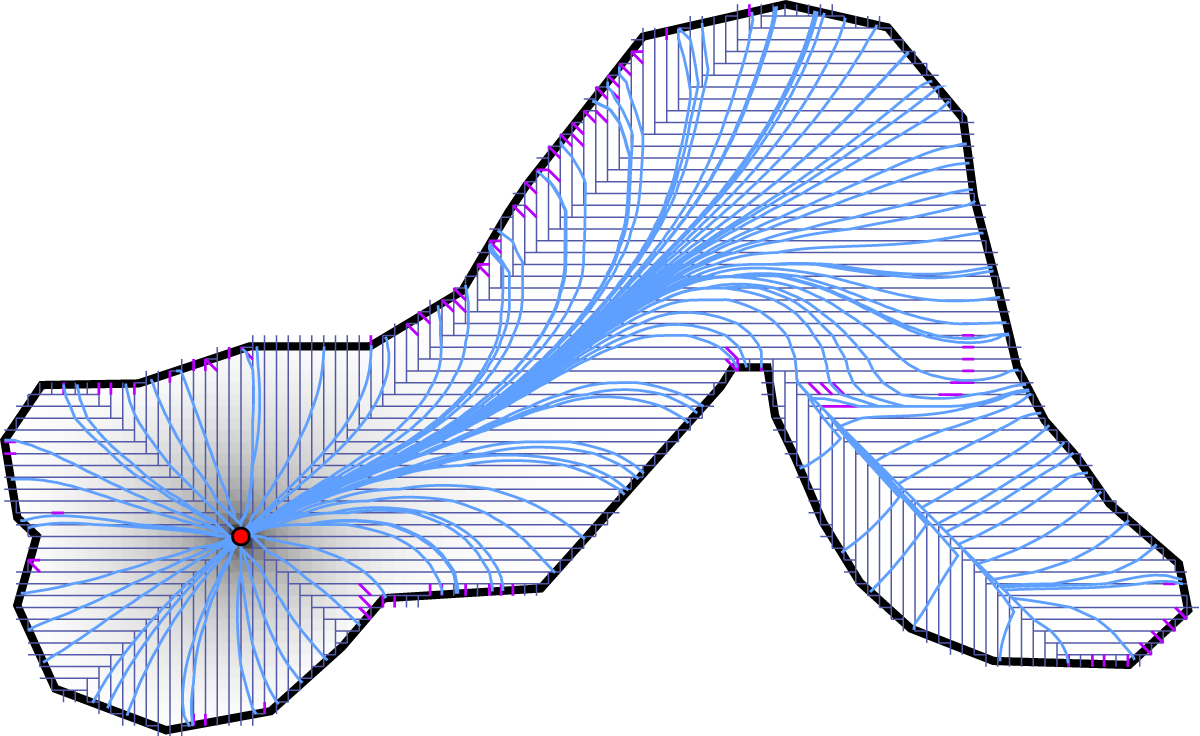}
&
\includegraphics[width=.19\linewidth]{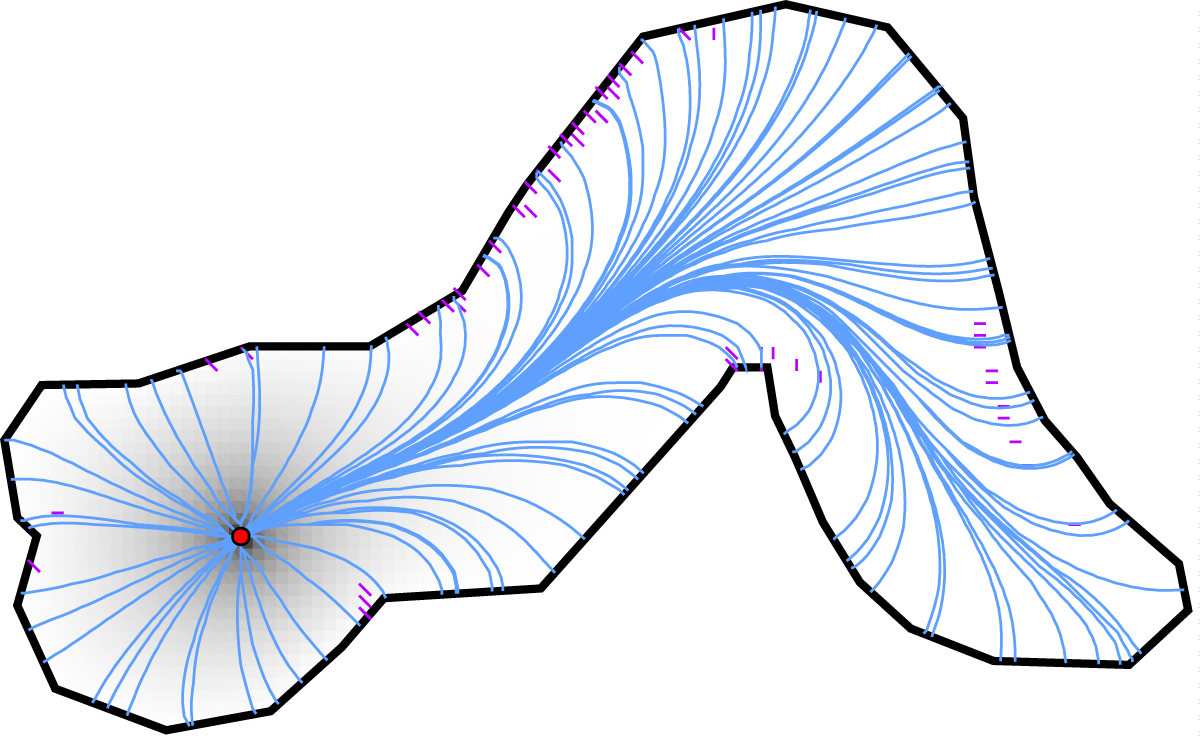}
\end{tabular}

\caption{Integral Curve Coordinates 
computed using various functions.
For each object, the first row depicts the $uv$ deformation map in the red and green channels;
the second row warps a checkerboard pattern;
and the third row displays the function values (background lightness),
maximum location (red circle)
integral lines (blue),
the cousin tree, and expansion edges (purple, Section~\ref{sec:expansion}).
}
\label{fig:functions}
\end{figure}

Figures~\ref{fig:compare_big}--\ref{fig:compare_composite} compare the results
of our deformation approach to Complex Barycentric Coordinates
\cite{weber2009complex}, Controllable Conformal Maps \cite{weber2010controllable}, 
Composite Mean Value Mappings \cite{schneider2013bijective},
and Locally Injective Mappings \cite{schuller2013locally}.
Our deformation, while less fair,
can be generalized to any dimension.
Although some integral lines do flatten out as a result of deformation,
no inverted elements are created by our approach.
Our deformation can be used to compute a correct starting configuration for
recent techniques which improve the fairness of deformations while preserving
bijectivity~\cite{schuller2013locally,aigerman2013injective}.

The long and narrow shape in Figure~\ref{fig:compare_composite} emphasizes the degree
to which the location of the maximum affects the overall deformation.
We believe that replacing the single maximum point by a skeleton or medial axis
is a fruitful direction for future research.

\begin{figure}[h]
	\centering
	\resizebox{\columnwidth}{!}{
	\begin{tabular}{rrr}
	  \includegraphics[width=.3\linewidth]{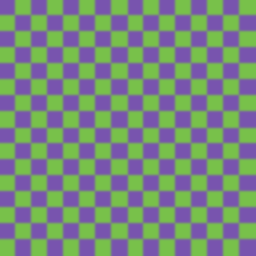}
	  &
	  \includegraphics[width=.3\linewidth]{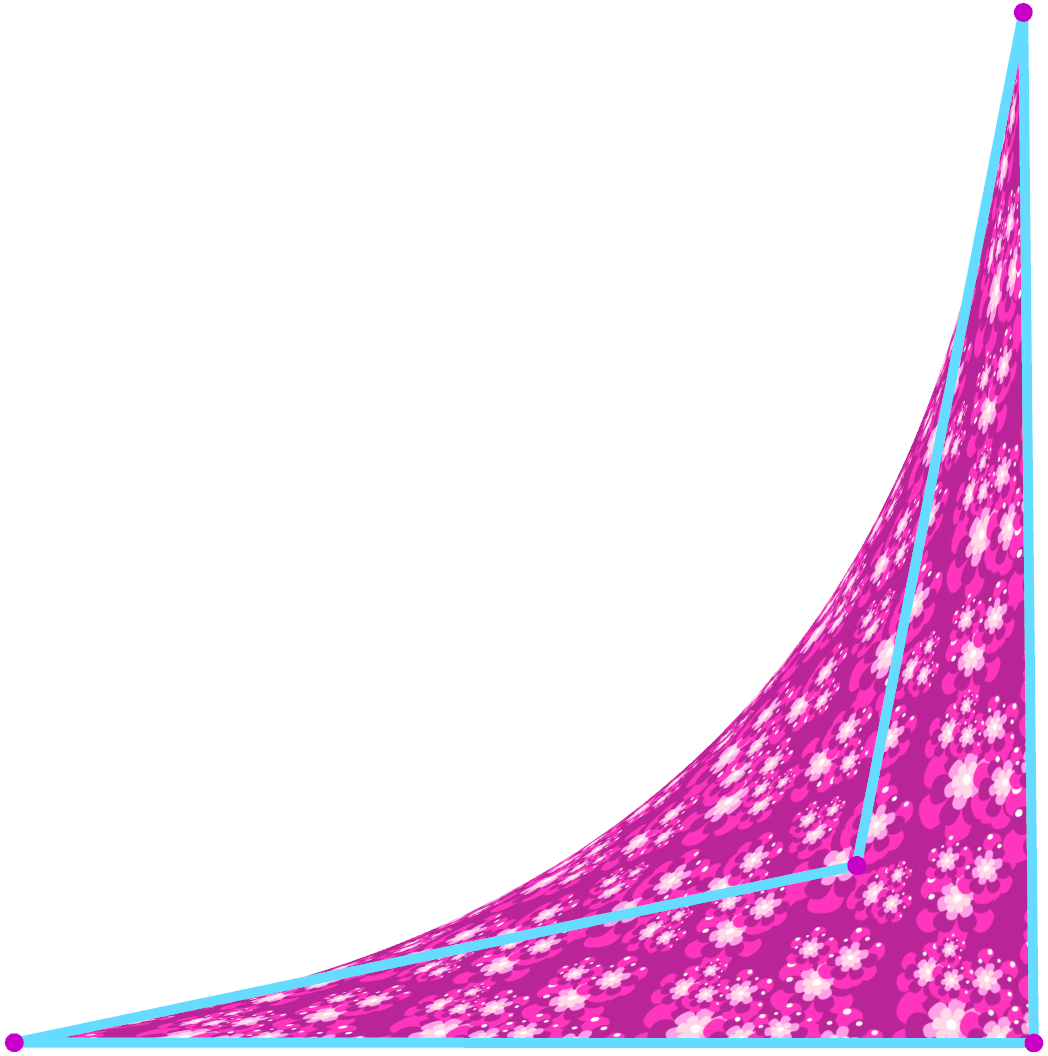}
	  &
	  \includegraphics[width=.3\linewidth]{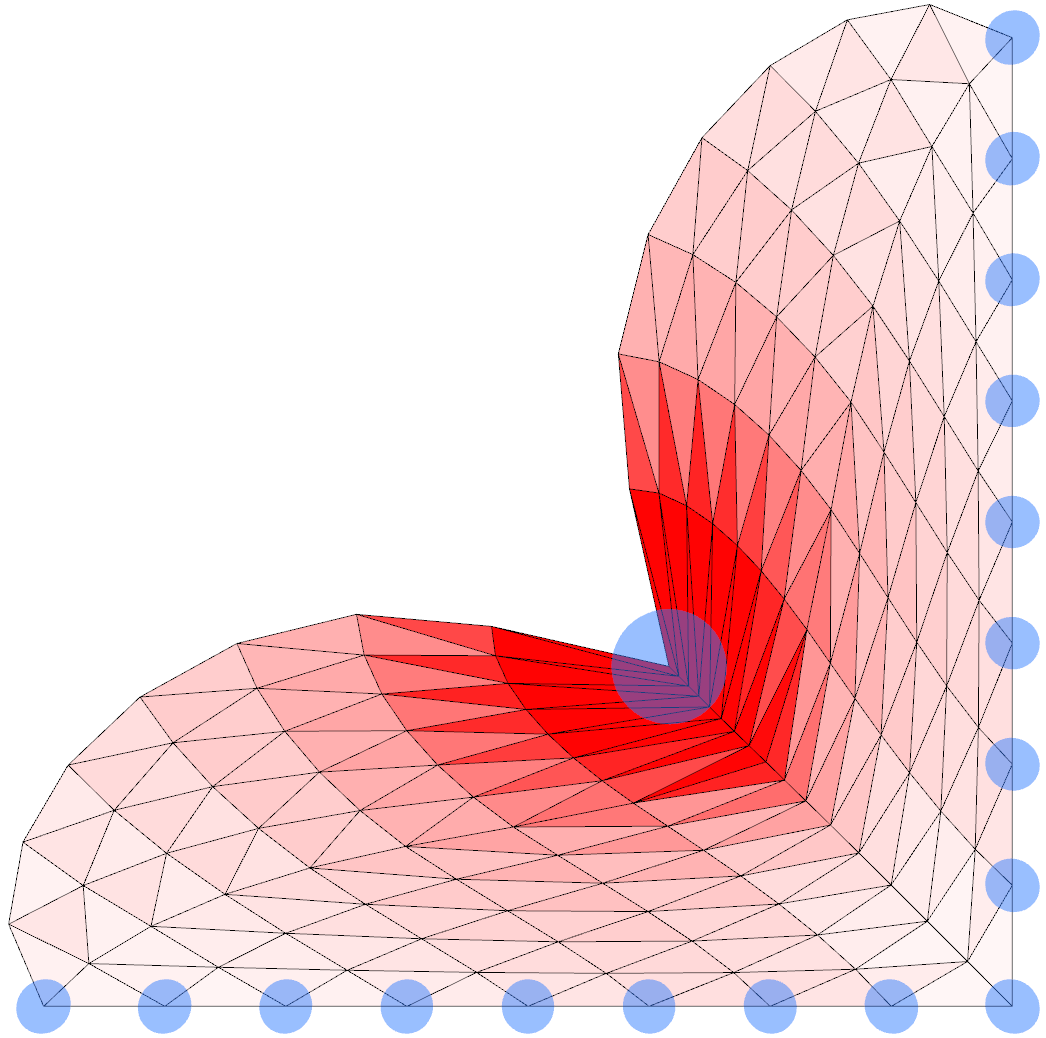}
	  \\
	  \includegraphics[width=.3\linewidth]{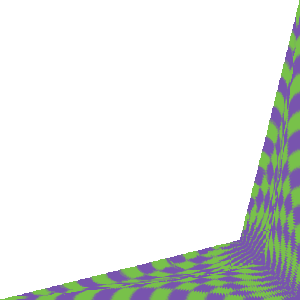}
	  &
	  \includegraphics[width=.3\linewidth]{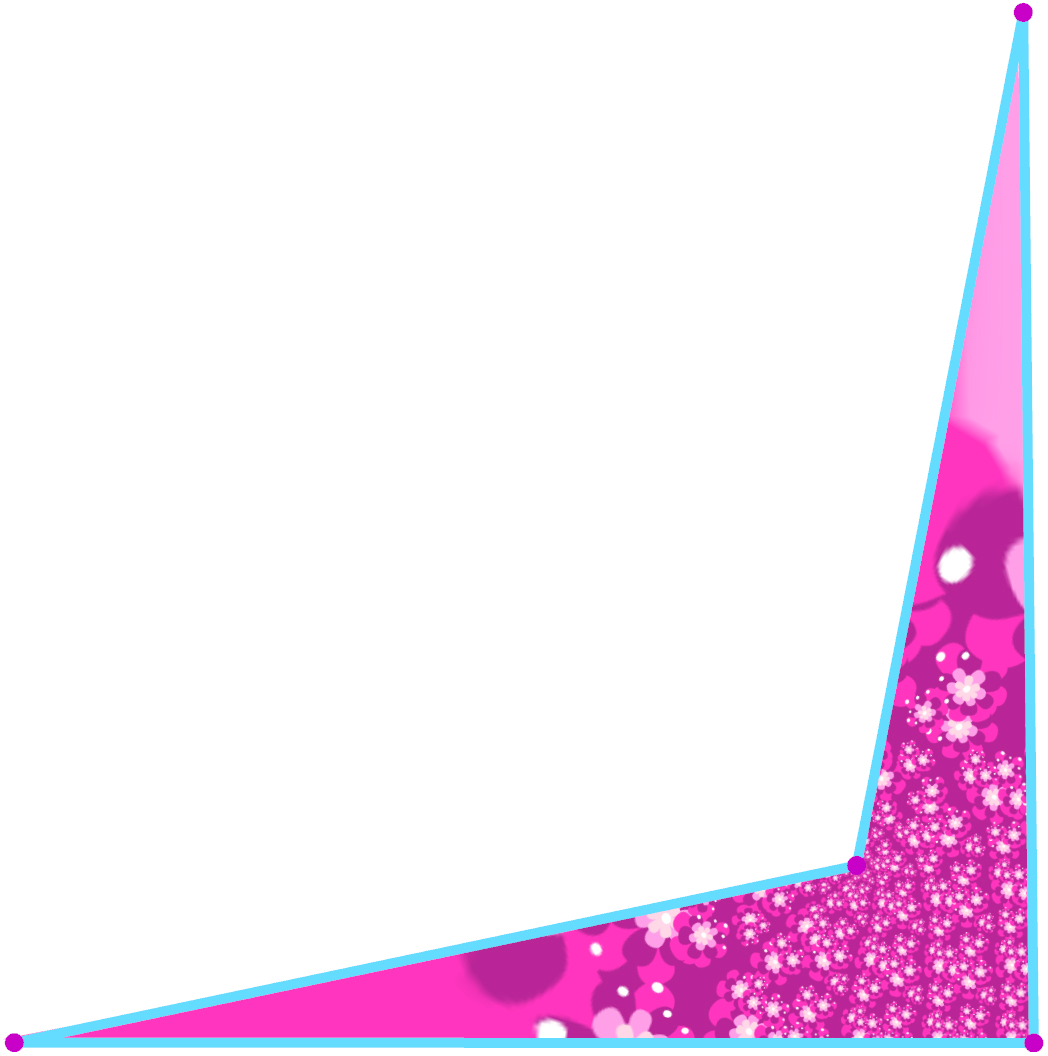}
	  &
	  \includegraphics[width=.3\linewidth]{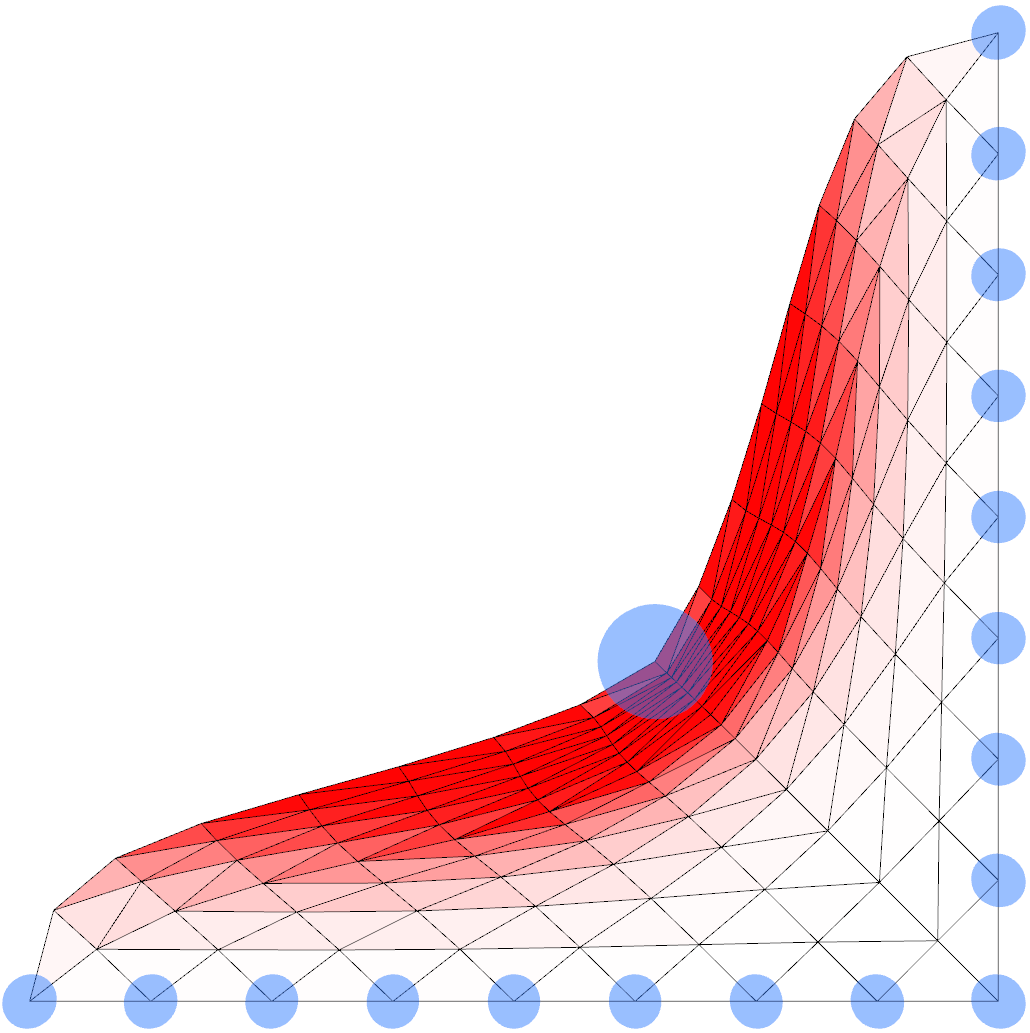}
	\end{tabular}
	}
	\caption{An extreme deformation of a square, a counter-example to the bijectivity of generalized barycentric systems~\protect\cite{Jacobson:2013:BMG}.
	Left column: A square shape with a checkerboard pattern, Integral Curve Coordinates with a Laplace equation not subject to cousin tree constraints.
	Middle column: Harmonic coordinates~\protect\cite{Joshi:2007:HCC}, Controllable Conformal Maps~\protect\cite{weber2010controllable}.
	Right column: Locally Injective Mapping using Dirichlet and Laplacian energy~\protect\cite{schuller2013locally}.
	Only Integral Curve Coordinates and Controllable Conformal Maps generate a bijective deformation.}
	\label{fig:compare_big}
\end{figure}


\begin{figure}[h]
	\centering
	  \includegraphics[width=.3\linewidth]{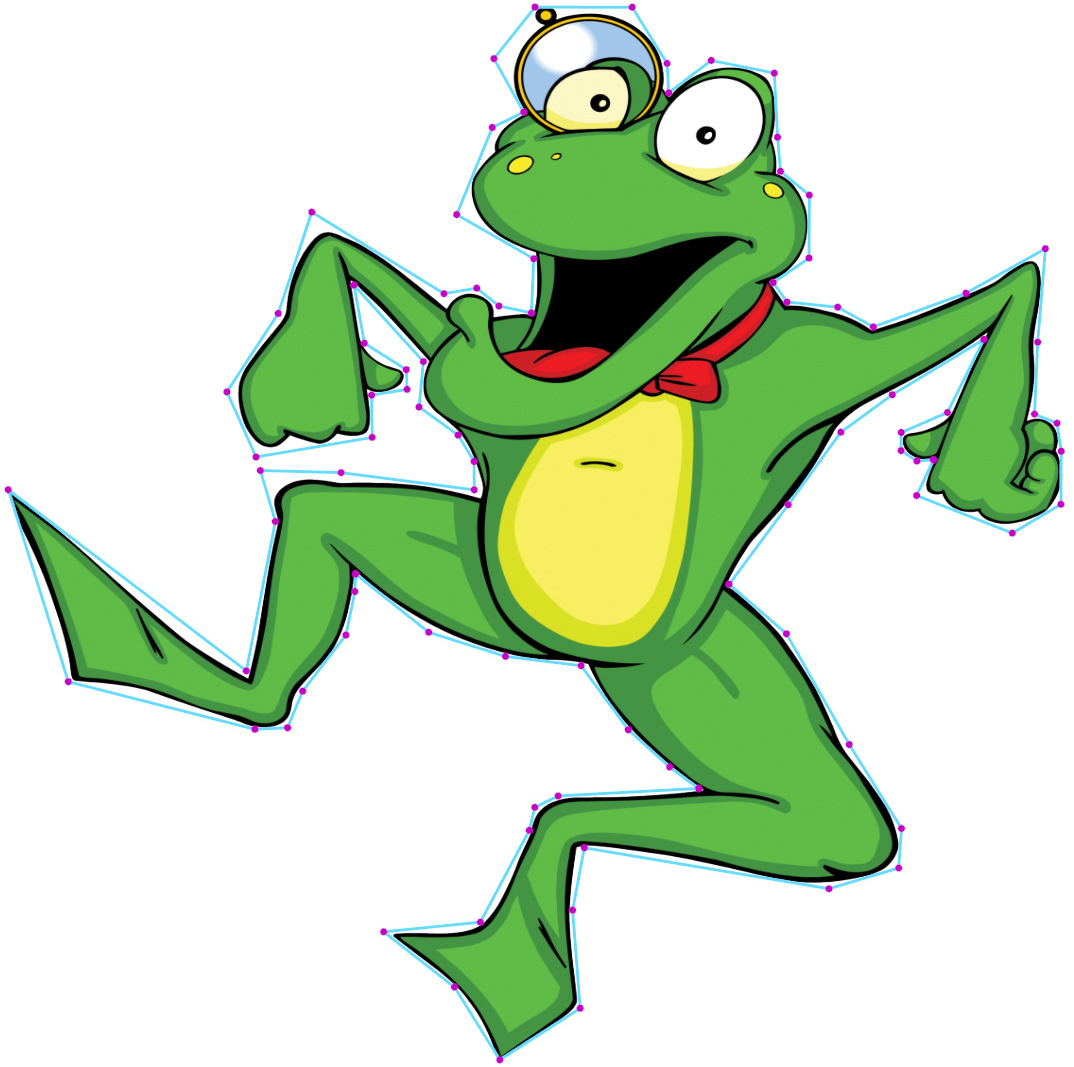}
	  \hfill
	  \includegraphics[width=.3\linewidth]{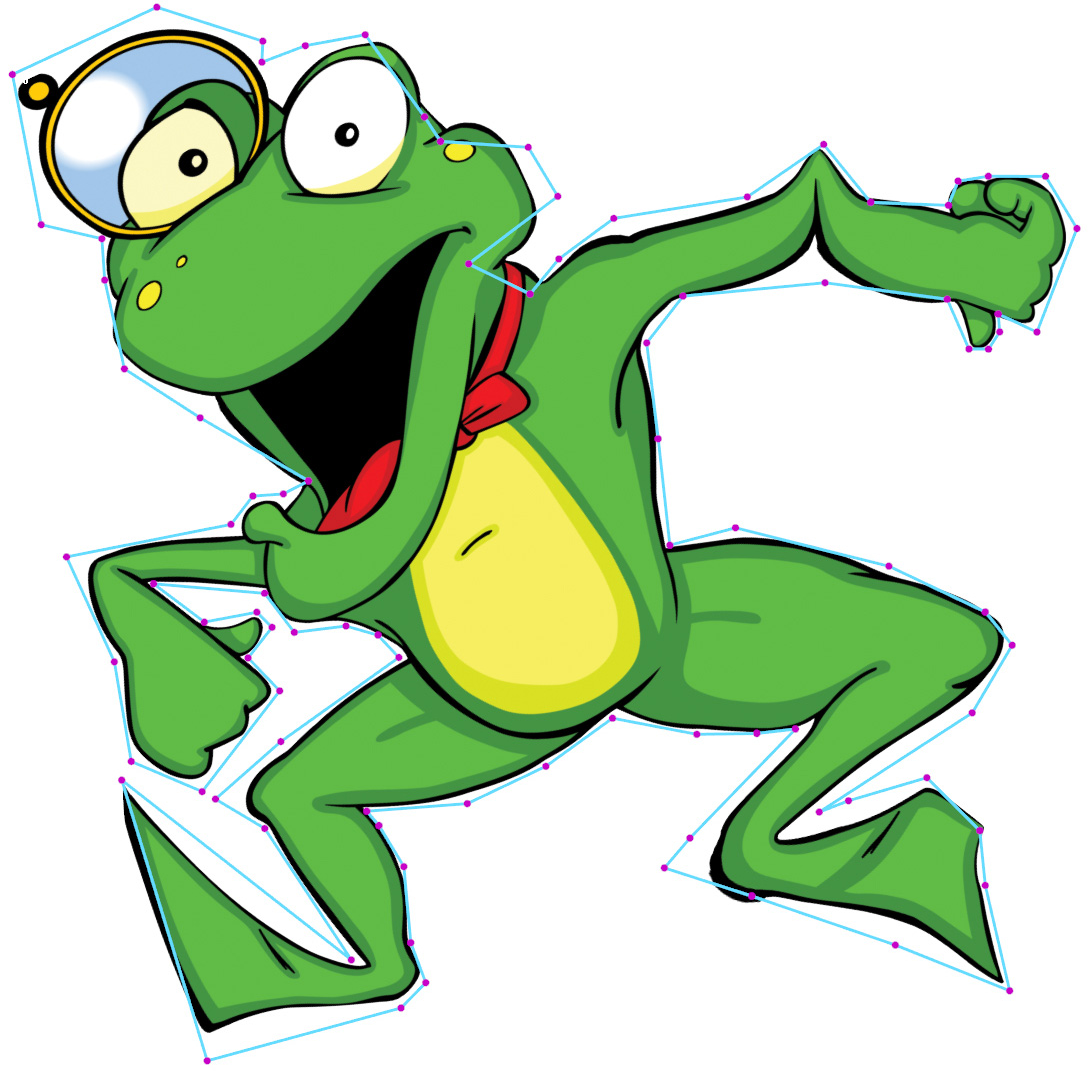}
	  \hfill
	  \includegraphics[width=.3\linewidth]{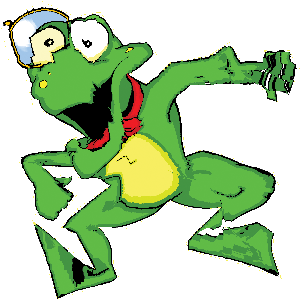}
	\caption{Left: The undeformed shape. Center: A deformation computed using Complex Barycentric Coordinates~\protect\cite{weber2009complex}.
	Right: A deformation computed using Integral Curve Coordinates; the function is the solution to the Laplace equation with the boundary values constrained to zero, subject to the cousin tree constraints. 
	Complex Barycentric Coordinates produce very smooth deformations, but are limited to 2D.}
	\label{fig:frog}
\end{figure}

\begin{figure}[h]
	\centering
	  \includegraphics[height=1in]{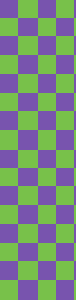}
	  \includegraphics[height=1in]{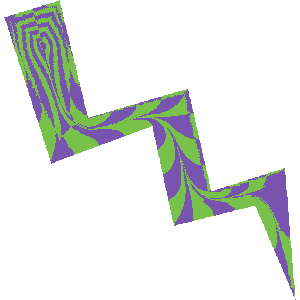}
	  \includegraphics[height=1in]{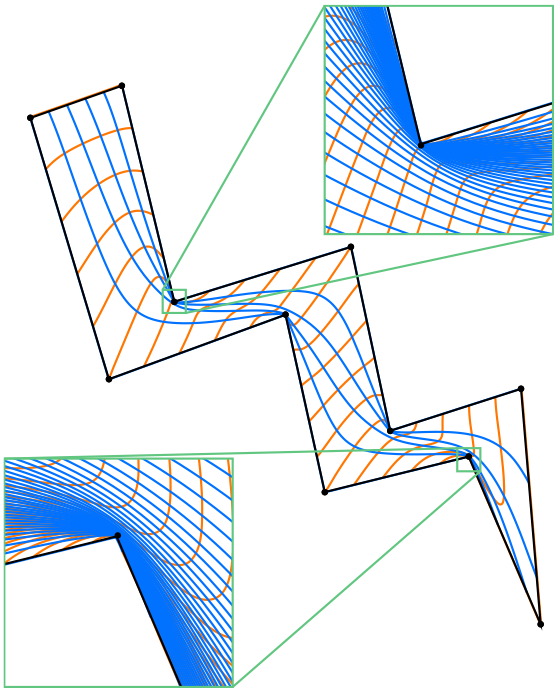}
	\caption{From left to right: The undeformed shape, a deformation computed using Integral Curve Coordinates with a Laplace equation not subject to cousin tree constraints, and the deformation computed using Composite Mean Value Mapping~\protect\cite{schneider2013bijective}.}
	\label{fig:compare_composite}
\end{figure}

\paragraph{Performance}
Our experiments were written in unoptimized Python and executed on a 2 Ghz Intel Core i7.
In all examples, we compute function values on a 50-by-50 discrete grid.
Performance is dominated by per-pixel integral curves tracing.
Our examples contain, on average, 38,850 pixels, and took approximately 30 minutes each.
There are large performance improvements to be obtained by
implementing integral curve tracing in a compiled language
and by parallelization.
Further performance improvements could be obtained by deforming
all points along an integral curve at once, rather than wastefully
re-tracing integral curves for each point along it.
Finally, an approach based on advecting the boundary could efficiently trace
all integral curves at once.

\section{Limitations}
\label{sec:limitations}
\label{sec:expansion}

In our piecewise linear implementation,
compression edges (Section~\ref{sec:integral_lines}) occur when the gradients
on either side of an edge point towards the edge.
Similarly, we call an edge an \emph{expansion edge}
when the gradients on either side point away from it.
%
When tracing integral lines downhill,
multiple integral lines converge at an expansion edge.
This leads to multiple integral lines intersecting the same boundary point;
the points along these integral lines will ``collapse'' as a result of the deformation.
Expansion edges rarely occur for the $L_1$ function values or the Laplace equation,
except near small concavities on the boundary.
Expansion edges are visualized in Figure~\ref{fig:functions} as purple edges,
and typically only occur near concavities of the boundary.
With $C^1$ or $G^1$ function interpolation, expansion edges (and compression edges)
would no longer occur. A looser requirement than continuity is simply that
the gradients on either side of an edge never point away from each other;
a tangible solution for this relaxed condition is unclear.

Our grid discretization of the boundary or cage may lead to problematic boundary gradients
near sharp angles ($< 45^{\circ}$). One solution
is to warp space with a simple ``plaid deformation''
such that (a) grid vertices are positioned exactly at boundary vertices and
(b) sharp angles are non-uniformly scaled and eliminated.

\section{Conclusions and Future Work}

Integral Curve Coordinates provide a new approach for bijective shape deformation
based on tracing the integral curves of functions with one critical point, a maximum.
While the deformations produced by our approach are not as fair as, for example,
Controllable Conformal Maps~\cite{weber2010controllable},
they are bijective in all dimensions.
The fairness of our deformations can be improved in a bijectivity-preserving manner
via the recent, complementary work of Sch{\"u}ller et al.~\cite{schuller2013locally}
and Aigerman and Lipman~\cite{aigerman2013injective}.
%
We believe that fairer functions may be found in our function space.
One approach may be to compute a compatible skeleton for the deformed and undeformed
shapes and treat the entire skeleton as the maximum.

Our approach is restricted to cage-based or boundary deformations.
In the future, we would like to extend our approach to other control structures,
such as points and bone skeletons, which are intuitive to manipulate
and can have far fewer vertices than a cage.
One could trace integral curves of a smoothed distance functions
from the control geometry~\cite{Peng:2004:IMO}.

Our piecewise linear implementation,
while preventing inverted elements ($det(M)<0$ in Equation \ref{eq:jacobian_matrix_discrete}),
does not prevent collapsed elements ($det(M)=0$), which are caused by compression edges.
We would like to address this in the future
with $C^1$ or $G^1$ monotonic interpolation of functions values,
or by simulating the infinitessimal separation of integral curves \cite{EHZ01}
and then perturbing the resulting deformation to correct collapsed elements.

We would also like to explore modifications to the cousin tree constraints. The constraints we compute, while correct, are not unique. Thus, we envisage an iterative procedure that updates the constraints and the function values.
Jacobson \cite{jacobson2013algorithms} explored such an iterative constraint modification scheme in an analogous setting to good effect.
A similar iterative scheme may also remove expansion edges.

Finally, we would like to implement our approach in higher dimensions,
applying it to problems such as the animation of volumetric models.


\section*{Acknowledgements}
We are grateful to Harry Gingold, Jyh-Ming Lien, and Alec Jacobson for fruitful discussions.
Ofir Weber generously provided the frog used in Figure~\ref{fig:frog}.

\bibliographystyle{IEEEtran_noURL}
\bibliography{bib/bibl,bib/coordinates,bib/toposmooth}

\appendix
\noindent
In the following, the domain is a regular $n$-dimensional grid with edges parallel to the $x_1,x_2,x_3,\hdots,x_n$ axes.
We call a subdomain \emph{ball-like} if the volume enclosed by its cells is contractible. 

\begin{theorem}
\label{thm:cousin_correct}
Let $D$ be a subdomain that is ball-like. Let $T$ be an associated cousin tree. Assume all nodes on the boundary of the subdomain have value greater than all grid neighbors outside the ball (including along imaginary diagonal grid edges). 
Then by rooting $T$ at the maximum and assigning monotonically decreasing (unique) values to internal nodes of $T$ along the path from the root (preserving the values at the root and the leaf), there can be no grid maximum, minimum, or saddles at any tree node
with 4- or 8-connectivity. 
(In dimensions $> 2$, we assume monotonic interpolation within a hypercube;
i.e.\ $2n$-connectivity.)
\end{theorem}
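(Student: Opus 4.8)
The plan is to recast the statement in the language of discrete Morse theory. For a grid vertex $v$, partition its neighbors (in the chosen connectivity) into an \emph{upper link} $U(v) = \{u : f(u) > f(v)\}$ and a \emph{lower link} $L(v)$. The link of an interior grid vertex is a cycle in 2D, and, under the monotone-interpolation-within-a-cell hypothesis, a sphere in higher dimensions; with respect to it, $v$ is a maximum iff $U(v)$ is empty, a minimum iff $L(v)$ is empty, and a saddle iff $U(v)$ (equivalently $L(v)$) is disconnected, and otherwise $v$ is regular. Thus it suffices to show that for every non-root tree node $v$ the set $U(v)$ is nonempty, $L(v)$ is nonempty, and $U(v)$ is connected.

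The two extremal cases are short. Since a non-root $v$ has a tree parent $p$ with $f(p) > f(v)$, and $p$ is an axis neighbor, $U(v)$ is nonempty and $v$ is not a maximum. For minima I would first establish the key \emph{opposite-direction lemma}: the axis neighbor $\bar p$ lying opposite the parent direction is always lower than $v$. Indeed, a higher $\bar p$ would have to be either the parent (impossible, as $p$ is the unique parent) or a tree cousin, but the cousin condition forces $\mathrm{parent}(\bar p)$ to be axis-adjacent to $p$, whereas the only grid vertex that is simultaneously an axis neighbor of $\bar p$ and axis-adjacent to $p$ is $v$ itself, which would make $\bar p$ a child rather than a cousin. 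If $\bar p$ lies outside the domain, the boundary hypothesis makes it lower directly. Either way $L(v)$ is nonempty, so $v$ is not a minimum.

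The main work is ruling out saddles by showing $U(v)$ is a single connected arc around $p$. The central observation refines the lemma above: if a higher axis neighbor $u$ (necessarily perpendicular to $p$, by the opposite-direction lemma) is a cousin rather than the parent, then $\mathrm{parent}(u)$ is the unique diagonal vertex between $u$ and $p$, so that diagonal is also higher than $v$ and joins $u$ to $p$ within $U(v)$. For $8$-connectivity I must additionally control diagonal neighbors that are higher on their own; here I would invoke the monotone-interpolation hypothesis, which forbids the only dangerous pattern, namely a high diagonal flanked by two low axis corners, because that pattern is exactly a bilinear saddle inside the cell. Chaining these facts, every higher neighbor connects back to $p$ through higher neighbors, so $U(v)$ is connected and $v$ is not a saddle. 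The same two lemmas drive the general $n$-dimensional, $2n$-connected case, with cubical-cell monotonicity supplying the link structure in place of the planar cycle.

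I expect the connectivity argument for saddles to be the main obstacle, and in particular the bookkeeping that simultaneously tracks axis neighbors, their parents, and the induced diagonal neighbors around the entire link; by contrast the maximum and minimum cases follow almost immediately once the opposite-direction lemma is in hand.
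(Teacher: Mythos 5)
Your proposal is correct and follows essentially the same route as the paper: your ``opposite-direction lemma'' is the paper's Lemma~\ref{lemma:child_continues} (proved by the identical cousin-rule contradiction), and your reduction to showing the upper and lower links each form a single connected component --- with the key step that a higher perpendicular cousin's parent is the diagonal vertex linking it back to $p$ --- is exactly the paper's Lemma~\ref{lemma:connected_components}. The only substantive difference is that where you invoke cell-wise monotone interpolation to exclude the high-diagonal/low-flanks pattern under 8-connectivity, the paper excludes it purely combinatorially from the tree structure (reserving the interpolation hypothesis for dimension $>2$), and it handles boundary nodes by an explicit enumeration of the 33 admissible configurations rather than by your shorter observation that exterior neighbors are forced into the lower link.
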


\begin{lemma}
\label{lemma:child_continues}
A node $v$ of cousin tree $T$ on subdomain $D$ with parent in the $+x_i$ grid direction has a child in the $-x_i$ grid direction, unless the neighbor in the $-x_i$ grid direction is outside $D$.
\end{lemma}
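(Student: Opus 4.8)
The plan is to prove the lemma directly from the defining property of a cousin tree, bypassing the breadth-first construction entirely. Let $e_i$ denote the unit grid step along the $x_i$ axis, write $p = v + e_i$ for the given parent of $v$, and set $u = v - e_i$, the grid neighbor of $v$ in the $-x_i$ direction. If $u \notin D$ we are in the escape clause of the statement, so I assume $u \in D$; since $D$ is ball-like and hence connected, the construction makes every vertex of $D$ a node of $T$, so $u \in T$. The entire goal then reduces to showing that the tree parent of $u$ is $v$ itself.

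First I would isolate the one geometric fact that does the work: $v$ is the unique grid vertex adjacent to both $p = v+e_i$ and $u = v-e_i$. This is immediate because $p$ and $u$ differ by $2e_i$, so any common grid neighbor must be their midpoint $v$ (matching coordinates forces agreement with $v$ in every axis, and along axis $i$ the distance-one requirement to each of $p$ and $u$ pins the value to $v_i$).

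Next I apply the cousin-tree property to the neighbor pair $(u,v)$, both of which are nodes of $T$: by definition they are related either as parent-child or as tree cousins. Uniqueness of parents together with $\mathrm{parent}(v) = p \neq u$ rules out $v$ being a child of $u$, so the parent-child alternative can only mean that $u$ is a child of $v$ --- the desired conclusion. For the tree-cousin alternative, the definition requires $\mathrm{parent}(u)$ to be a grid neighbor of $\mathrm{parent}(v) = p$; but $\mathrm{parent}(u)$ is in any case a grid neighbor of $u$, and by the geometric fact the only vertex adjacent to both $u$ and $p$ is $v$, forcing $\mathrm{parent}(u) = v$ once more. Either way $\mathrm{parent}(u) = v$, so $v$ has a child in the $-x_i$ direction. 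A one-line preliminary remark disposes of the degenerate possibility $u = \text{root}$: every grid neighbor of the root is a child of the root (the root has no parent, so the cousin alternative is unavailable), which would make $v$ a child of $u$ and contradict $p = v + e_i$; hence $u$ is a non-root node and genuinely has a parent.

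I expect the only substantive step to be the geometric uniqueness fact; the remainder is an exhaustive reading of the two cases permitted by the cousin-tree definition, with the mild subtlety that the parent-child and cousin alternatives both collapse to $\mathrm{parent}(u) = v$. Should a construction-based argument be preferred instead, one could show that the axis along which a node's parent lies is the first coordinate axis on which the node deviates from the root, and that $u = v - e_i$ inherits this first deviation axis from $v$; but the definition-based route above is shorter and avoids tracking the dimension-ordered frontier expansion of the breadth-first search.
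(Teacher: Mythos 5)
Your proof is correct and takes essentially the same route as the paper's: both apply the cousin-tree definition to the pair $(v,\,v-e_i)$ and use the fact that $v$ is the only common grid neighbor of $v+e_i$ and $v-e_i$ to force the parent of $v-e_i$ to be $v$. Your version merely states directly (rather than by contradiction) what the paper argues, and adds the harmless extra care of handling the root case and noting that the two alternatives collapse to the same conclusion.
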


\begin{proof}
Without loss of generality, assume node $v$ has its tree parent $u$ in the $+x_2$ grid direction.  Let $w$ be the node in the $-x_1$ grid direction.  Assume $v$ is not $w$'s tree parent.

\begin{center}
  \includegraphics[width=.25\linewidth]{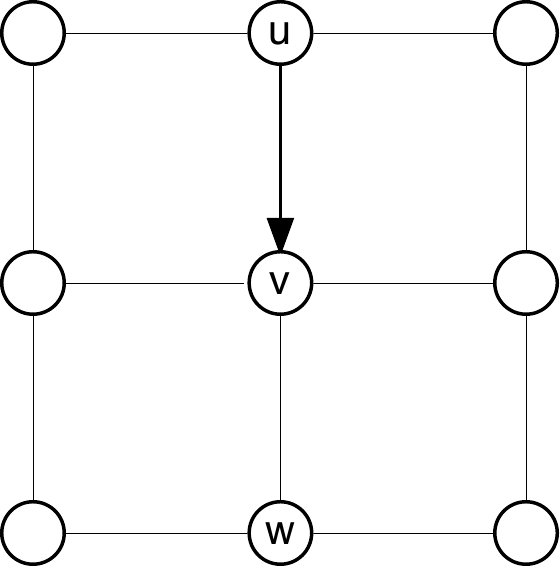}
\end{center}

The cousin rule tells us that $v$'s tree parent $u$ and $w$'s tree parent must be grid neighbors.  Yet the only grid neighbor of $w$ within 1 grid edge of $v$'s parent of $u$ is $v$.  Thus we have reached a contradiction, and $v$ must be $w$'s parent.
\end{proof}

\begin{lemma}
\label{lemma:connected_components}
For any non-boundary tree node $v$, its grid neighbors (even with the addition of diagonal edges on the hypercube face planes) with greater value all belong to the same connected component (partitioned according to greater/less than $v$'s
value) and its grid neighbors (even with the addition of diagonal edges)
with lesser value all belong to a second connected component.
\end{lemma}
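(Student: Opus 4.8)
The plan is to prove the equivalent statement that the \emph{link} of $v$ --- its grid neighbors, two of them being adjacent whenever they are grid-adjacent (including through the face diagonals the lemma permits) --- splits into exactly two connected pieces under the ``value greater than $v$'' versus ``value less than $v$'' labelling. Because the assigned values are unique, every neighbor is strictly above or strictly below $v$, so the two labels partition the link; showing there are exactly two pieces is then identical to the claim that the upper neighbors form one connected component and the lower neighbors form a second. This also immediately excludes $v$ from being a saddle, while maxima and minima are excluded because every non-root node has its tree parent as an upper neighbor and, by Lemma~\ref{lemma:child_continues}, a tree child in the opposite grid direction as a lower neighbor.

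First I would fix coordinates so that the tree parent $u$ of $v$ lies in the $+x_i$ direction and, using Lemma~\ref{lemma:child_continues}, obtain a tree child $s$ in the opposite $-x_i$ direction (available since $v$ is not a boundary node). These serve as anchors: $u$ is upper because it is a proper ancestor of $v$ and values decrease along tree paths, and $s$ is lower because it is a descendant. The remaining task is to show that, traversing the link from the parent side to the child side, the label switches from upper to lower without ever switching back, so that neither side fragments.

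Next I would classify the other axis-aligned neighbors with the cousin rule: each is either a child of $v$, and hence lower, or a tree cousin of $v$. In the cousin case the rule forces the cousin's tree parent to be precisely the diagonal grid vertex lying between that neighbor and $u$, and a short breadth-first-search depth computation places such a cousin on the same tree level as $v$. The diagonal (face-plane) neighbors then bridge consecutive axis neighbors in the link, and I would use these added adjacencies --- together with, in dimension greater than two, the monotone-interpolation-within-a-hypercube hypothesis that supplies the necessary $2n$-connectivity --- to show that each newly encountered neighbor attaches to the arc already built on its side rather than starting a new one.

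The crux, and the step I expect to be hardest, is excluding the alternating pattern (upper, lower, upper, lower, \dots) around the link, which is exactly what a saddle would produce. I plan to rule it out by showing that any such alternation would require either two opposite axis neighbors of $v$ to both play the role of parent (contradicting uniqueness of the tree parent) or the tree parents of two cousins to fail to be grid-adjacent (contradicting the cousin rule), with Lemma~\ref{lemma:child_continues} pinning down the $-x_i$ end so that the lower arc cannot reopen. Finally I would promote the planar argument to $\mathbb{R}^n$: the link is an $(n-1)$-sphere, and I would establish connectivity of the upper and lower parts by running the two-dimensional argument on each coordinate $2$-plane through $v$ and gluing the results along the shared axis neighbors, the hypercube monotonicity assumption guaranteeing that the gluing keeps each part in one piece.
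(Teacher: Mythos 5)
Your skeleton matches the paper's: anchor $v$ between its tree parent $u$ above and, via Lemma~\ref{lemma:child_continues}, a child in the opposite axis direction below; classify the remaining axis neighbors by the cousin rule; reduce to coordinate $2$-planes through the parent axis and glue along the shared anchors. But the step you yourself call the crux --- excluding an alternating upper/lower pattern around the link --- is assigned the wrong mechanism. The paper does not obtain a contradiction from parent uniqueness or from cousins' parents failing to be grid-adjacent. It observes that the three neighbors on one side of the parent axis, say $a = v + x_1 + x_2$, $b = v + x_1$, $c = v + x_1 - x_2$, form a tree path parallel to $u, v, w$: if $b$ is a cousin of $v$, its parent is forced to be $a$ and (by Lemma~\ref{lemma:child_continues}) its child to be $c$, so $Value(a) > Value(b) > Value(c)$; if instead $b$ is a child of $v$, then $v$ dominates $b$ and $c$ outright. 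Either way the labels along the arc $u,a,b,c,w$ change sign exactly once, wherever the threshold falls. Without this monotone-chain observation, the dangerous configuration $Value(a) < Value(v) < Value(b)$ --- which strands $b$ from $u$ in the upper set --- is not excluded by any purely structural fact about the tree, and your plan as written does not close it. Relatedly, you treat the diagonal neighbors $a,c$ only as connectivity ``bridges,'' but they are members of the neighbor set whose own upper/lower labels must be pinned down, which is exactly what the path $a \to b \to c$ accomplishes. Your appeal to BFS depth is also out of place: the lemma concerns an arbitrary cousin tree, and the BFS construction is only shown to yield one in Theorem~\ref{thm:bfs_correct}.

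The second omission is the boundary case. In the paper's usage a ``non-boundary tree node'' is an internal node of the tree, i.e.\ any node inside $D$ --- including nodes on the boundary of the subdomain, whose grid neighbors may lie outside $D$; Theorem~\ref{thm:cousin_correct} needs the lemma at precisely those nodes. There Lemma~\ref{lemma:child_continues} need not supply a child inside $D$, the column $a,b,c$ may be partially absent, and one must invoke the hypothesis that boundary nodes exceed all exterior neighbors; the paper disposes of this by enumerating the $33$ admissible configurations (Figure~\ref{fig:cousin_boundary_conditions}). Your proposal excludes these nodes by assumption, so even when completed it would establish a strictly weaker statement than the one the theorem consumes.
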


\begin{proof}
Without loss of generality, assume $v$ has parent $u$ in the $+x_2$ direction.  We now consider two cases, $v$ is on the boundary of $D$ and $v$ is not on the boundary of $D$.

	\paragraph{Case {$v$ is not on the boundary of $D$}:}
	
	It follows from Lemma \ref{lemma:child_continues} that $v$ has a child $w$ in the $-x_2$ direction.  As paths are monotonic, $Value(u) > Value(v) > Value(w)$.  Since we wish to prove that $v$ will have exactly two connected components partitioned according to greater/less than $v$'s value, we can restrict our examination to grid neighbors of $u,v,w$ in the $+x_1$ direction, again without loss of generality; the greater value connected component will have to include $u$ and the lesser value connected component will have to include $w$.  We now consider two subcases, {$v$ is not the tree parent of its $x_1$-direction grid neighbor $b$} and {$v$ is the tree parent of $b$}. To clarify, when determining connected components, neighbors along diagonal edges may be considered; however, tree edges are always grid edges, and neighbors in any sense except when computing connected components are only along grid edges.

	\begin{center}  \includegraphics[width=.8\linewidth]{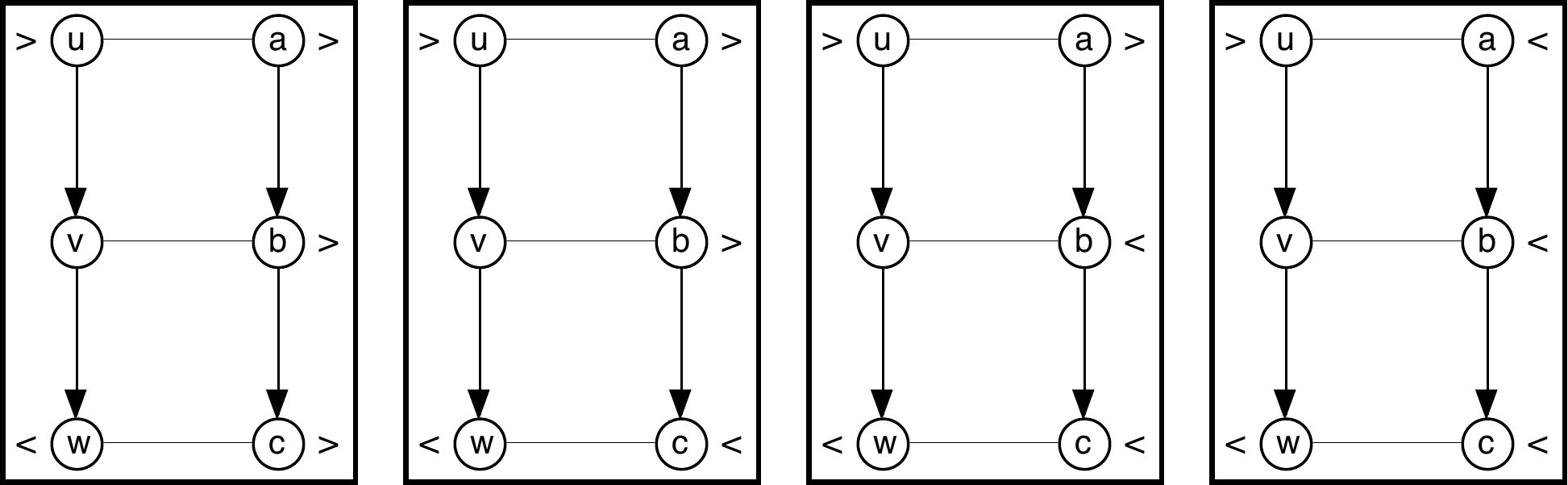}
	\end{center}

	Because of the cousin rule, $b$'s parent must be $a$, since $a$ is the only grid neighbor of $b$ within 1 grid edge of $v$'s parent of $v$.  Lemma \ref{lemma:child_continues} tells us that $c$ must be the tree child of $b$.  Due to the monotonicity of values along tree paths, $Value(u) > Value(v) > Value(w)$ and $Value(a) > Value(b) > Value(c)$.  The following diagram depicts all possible greater/less than relationships between $a,b,c$ and $v$.

	\begin{center}
	\includegraphics[width=.15\linewidth]{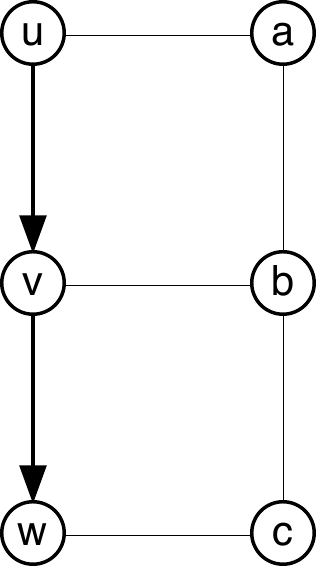}
	\end{center}

	As we can see, in all possibilities, there are exactly two connected components partitioned according to greater/less than $v$'s value.  The component with values greater obviously contains $v$'s tree parent $u$, and the component with values lesser obviously contains $v$'s tree child $w$.

		\emph{Subcase:} $v$ is the tree parent of $b$.

		\begin{center}
		\includegraphics[width=.15\linewidth]{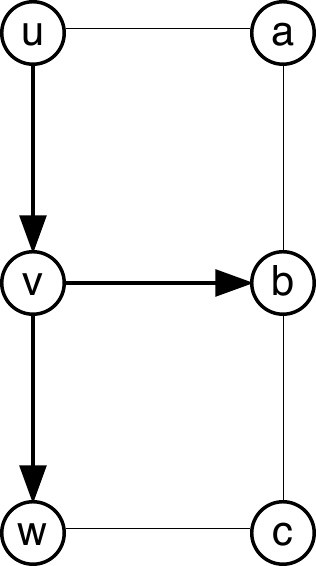}
		\end{center}

		The cousin tree imposes the following restrictions on $a$ and $c$.  $a$ cannot be the tree child of $b$ as $u$ and a would not be tree cousins.  Since $b$ cannot be $a$'s tree parent or its tree child, $b$ and $a$ must be tree cousins.  Therefore $a$'s tree parent must be a grid neighbor of $v$.  The only possibility is $u$.  Since $c$ cannot be the tree parent of $b$ (resp. $w$), it must be the tree child or cousin of $b$ (resp. $w$).  Therefore $c$ must be the tree child of one and the tree cousin of the other.  In either case, the monotonicity of values along tree paths implies $Value(v) > Value(c)$ as well as $Value(v) > Value(b)$, $Value(v) > Value(w)$, and $Value(u) > Value(v)$.  The value of $a$ may be greater than or less than $v$, but in both cases there are exactly two connected components (partitioned according to greater/less than $v$'s value).  As in the earlier subcase, the component with values greater obviously contains $v$'s tree parent $u$, and the component with values lesser obviously contains $v$'s tree child $w$.

	\paragraph{Case {$v$ is on the boundary of $D$}:}
	
	If $v$'s grid neighbor $w$ in the $-x_2$ direction is in $D$, then Lemma \ref{lemma:child_continues} applies and $v$ must be the tree parent of $w$ and hence $Value(v) > Value(w)$.  Otherwise, $w$ is not in $D$, but by assumption $Value(v) > Value(w)$.  Thus $v$'s neighbor in the $+x_2$ direction has value greater than $v$, and $v$'s neighbor in the $-x_2$ direction has value less than $v$.  Following the same argument as in Case {$v$ is not on the boundary of $D$}, we can again restrict our examination to grid neighbors of $u,v,w$ in the $+x_2$ direction (without loss of generality).

	\begin{center}
	\includegraphics[width=.15\linewidth]{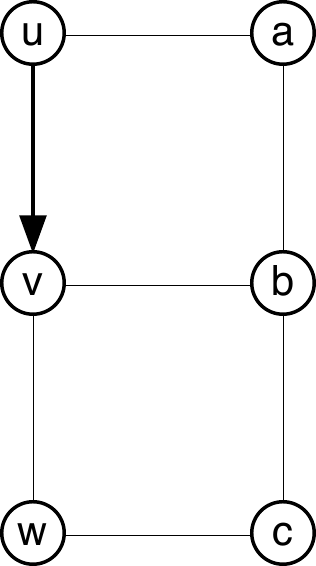}
	\end{center}

	There are 33 valid cousin tree configurations among the $2^{4}$ possible boundary conditions given $a,b,c,w$ can each be outside $D$.  They are presented in Figure \ref{fig:cousin_boundary_conditions}.
		
	In every case there are exactly two connected components partitioned according to greater/less than $v$'s value.  The component with values greater obviously contains $v$'s tree parent $u$, and the component with values lesser obviously $w$.
\end{proof}

\begin{proof}
Lemma \ref{lemma:connected_components} tells us that any internal tree node $v$ has one connected component of nodes with value greater than $v$'s and another connected component with values lesser (0 ``folds'').  It follows directly then that $v$ cannot be a minimum, maximum, or saddle.
\end{proof}

\begin{theorem}
\label{thm:bfs_correct}
A breadth-first search (BFS) in a ball-like domain $D$ where at each BFS generation all $x_1$, then all $x_2$, then all $x_3$, and so on, grid edges are explored in order constructs a cousin tree.
\end{theorem}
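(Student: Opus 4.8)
The plan is to prove the claim by an induction on BFS generations, after first reducing the cousin condition to a purely local statement about adjacent pairs that straddle two consecutive generations. First I would observe that the procedure is an ordinary breadth-first search, so the generation in which a vertex is first reached equals its geodesic (grid-graph) distance to the root. Because the axis-aligned grid graph is bipartite and $D$ is connected (being ball-like), a vertex's distance parity is fixed by its bipartition class; hence any two axis-aligned grid neighbors differ in generation by \emph{exactly} one, and no two vertices of the same generation are ever adjacent. Consequently every adjacent pair in $T$ has the form $(u,v)$ with $v=u-e_i$ and $\mathrm{level}(u)=\mathrm{level}(v)+1=:L$, and it suffices to show each such pair is either parent--child or tree cousins. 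This also explains why the check is self-contained per generation: when generation $L$ is processed the only new adjacencies are between the freshly added level-$L$ vertices and the level-$(L-1)$ frontier.

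Next I would record the operational rule the dimension-ordered passes impose: a level-$L$ vertex $u$ is adopted during the pass of the \emph{smallest} axis $x_j$ for which $u$ has a level-$(L-1)$ grid neighbor, and its parent is that neighbor. Applying this to a pair $(u,v=u-e_i)$ yields a clean trichotomy. The adopting axis cannot be $x_j$ with $j>i$, since $v$ is itself a level-$(L-1)$ neighbor of $u$ along $x_i$; so $j\le i$. If $p(u)=v$ we are done (parent--child). The two remaining possibilities are (a)~$j<i$, and (b)~$j=i$ with $p(u)=u+e_i$, which forces both $u-e_i$ and $u+e_i$ to lie at level $L-1$.

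In case (a) I would prove the ``parent-shift'' identity $p(v)=p(u)-e_i$, which immediately makes $p(u)$ and $p(v)$ grid neighbors and hence $u,v$ cousins. Writing $p(u)=u+s\,e_j$ with $s\in\{+1,-1\}$, the four points $v,u,p(u),q:=p(u)-e_i$ span a unit square in the $(x_i,x_j)$-plane; because $D$ is a contractible union of cells this square is filled, so $q\in D$, and one checks that $q$ sits at level $L-2$ and that the ``no smaller-axis descent'' property enjoyed by $u$ transfers to $v$, so that $v$ is in turn adopted along $x_j$ by $q$. This bookkeeping is routine once the square is known to be present.

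The main obstacle is case (b): I must show that a ball-like domain admits no such \emph{double descent}, i.e.\ that no interior vertex has both of its opposite neighbors along a single axis strictly closer to the root. This is the only place where contractibility is genuinely essential, and it is also what rules out the tie that would otherwise make the parent in case (a) ambiguous. A double descent would require the straight two-step link $u-e_i,\,u,\,u+e_i$ (the unique length-two route between the outer two vertices) to be bypassed by two strictly shorter detours; in a simply connected cell complex, closing up those detours against the straight link would enclose a region that is missing from $D$, i.e.\ a hole, contradicting ball-likeness. I would make this precise with a discrete Jordan-curve/filling argument applied to the closed loop formed by the two geodesics to $u\pm e_i$ together with the straight link. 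Once the no-double-descent lemma is in hand, case (b) is vacuous and cases (a) and parent--child cover everything; I would then dispatch the low-stakes remaining bookkeeping — adjacent pairs involving exterior (leaf) boundary vertices, handled via the boundary-value assumption — to conclude that the construction produces a cousin tree.
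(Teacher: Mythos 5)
Your overall architecture matches the paper's: reduce to adjacent pairs straddling consecutive BFS levels (your bipartite-parity argument is the paper's observation that grid neighbors ``cannot have the same taxicab distance to the BFS root''), use the fixed axis ordering to pin down who adopts whom, and invoke ball-likeness to kill the configurations that would break cousinhood. Your packaging is cleaner --- an explicit trichotomy, a parent-shift identity, and a named no-double-descent lemma --- where the paper instead runs an induction on generations and enumerates, for each neighbor of $v$, its possible level and possible parent, discharging the bad cases by appealing to the claim that BFS generations equal taxicab distance in a ball-like domain. But the two places where you lean on ball-likeness are exactly the load-bearing steps, and neither of your justifications holds up as written.

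First, in case (a) you assert that the unit square spanned by $v$, $u$, $p(u)$, and $q=p(u)-e_i$ is filled ``because $D$ is a contractible union of cells.'' That inference is false: three corners of a unit cell lying in a contractible union of cells does not force the fourth to lie in it --- an L-shaped union of three cells already violates it at the reflex corner. Any correct argument here must exploit the level structure ($v$, $u$, $p(u)$ at levels $L-1$, $L$, $L-1$ and the absence of smaller-axis descents at $u$), not contractibility alone; note also that if $q\notin D$ the theorem itself would fail for that pair, since the only candidates for $p(v)$ adjacent to $p(u)$ are $u$ (excluded by level) and $q$. Second, your Jordan-curve argument for no-double-descent assumes its conclusion: in a simply connected union of cells, the closed loop formed by the two geodesics and the straight link bounds a region that \emph{is} filled, so there is no hole to exhibit and no immediate contradiction. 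You still owe an argument for why that filled region is incompatible with $u$ being reached strictly later than both $u\pm e_i$; this is genuinely the hard point, and the paper dispatches it (along with the square-filling issue) only by asserting that BFS generation equals taxicab distance, under which at most one of $u\pm e_i$ can be closer to the root than $u$. Until you either prove a statement of that strength or complete the filling argument with the distance function in hand, the proof has a hole precisely where the hypothesis of ball-likeness is supposed to do its work.
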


\begin{proof}
By induction.  Let $n$ represent the number of BFS generations of growth from the root node.  After step $n$, nodes reached at BFS generation $n-1$ have all of their grid neighbors in the BFS tree or outside $D$; grid neighbors in the BFS tree will be shown to satisfy the cousin tree definition.

After step $n=1$, the root node has become the parent of all grid neighbors.  The root node satisfies the cousin tree constraints by being the BFS parent of all grid neighbors.

Assume true for $n=k$.  Nodes reached at BFS generation $k-1$ have all grid neighbors also in the BFS tree with cousin tree definition satisfied or outside $D$ for BFS generation $\leq k-1$ nodes.

We wish to show that after step $n=k+1$ all nodes reached at BFS generation $k$ now also have all their grid neighbors in the BFS tree satisfying the cousin tree definition or outside $D$.

Consider a grid node $v$ reached at BFS generation $k$.  Let $u$ be the BFS parent of $v$.  $u$ was thus reached at BFS generation $k-1$.  Consider the following arbitrary axis-aligned figure of $v$:

\begin{center}
	\includegraphics[width=.3\linewidth]{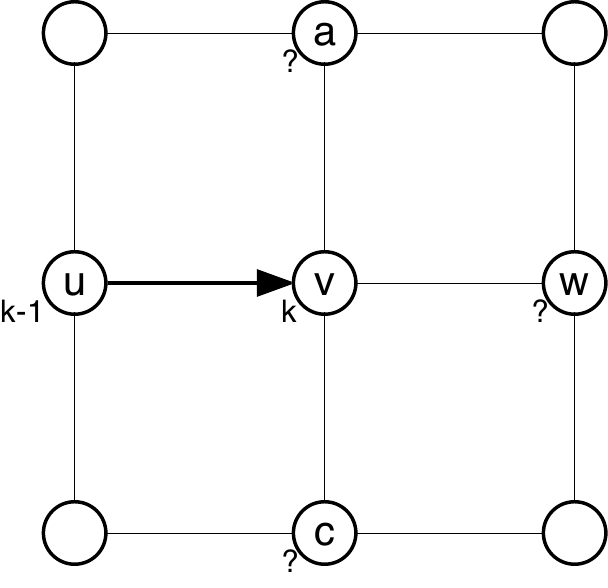}
\end{center}

After step $n=k+1$ nodes $a,c,w$ are also in the BFS tree.  We aim to show that $a,c,w$ are either the BFS children of $v$, tree cousins of $v$ in the BFS tree, or outside $D$.

If $w$ (or $a,c$) is outside $D$, then it is of no concern to us.  If $w$ (or $a,c$) is inside $D$, then it must have been reached at step $k-1$, $k$, or $k+1$.  $w$ (or $a,c$) cannot have been reached before step $k-1$ since it would have been able to reach $v$ at step $k-1$.  $w$ (or $a,c$) must be reached before step $k+2$ since $v$ can reach it at step $k+1$.

Suppose node $w$ was reached at generation $k-1$.

\begin{center}
	\includegraphics[width=.3\linewidth]{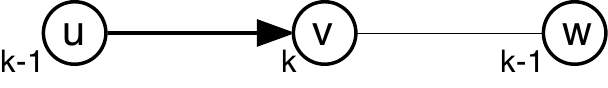}
\end{center}

The inductive hypothesis tells us that $w$ has cousin tree relationships to all its grid neighbors, including $v$.  Yet $w$ is not the BFS parent, BFS child, or BFS cousin of $v$ (since the BFS parent of $w$ cannot be a grid neighbor of $u$).  This contradicts $w$ having been reached at BFS generation $k-1$.

Suppose node $w$ was reached at generation $k$.

\begin{center}
	\includegraphics[width=.3\linewidth]{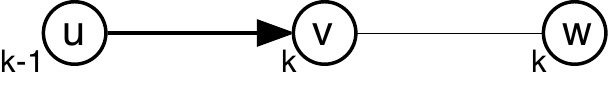}
\end{center}

This too is impossible since $v,w$ are neighbors and cannot have the same taxicab distance to the BFS root in our ball-like domain $D$.

This leaves us with $w$ reached at generation $k+1$.

\begin{center}
	\includegraphics[width=.3\linewidth]{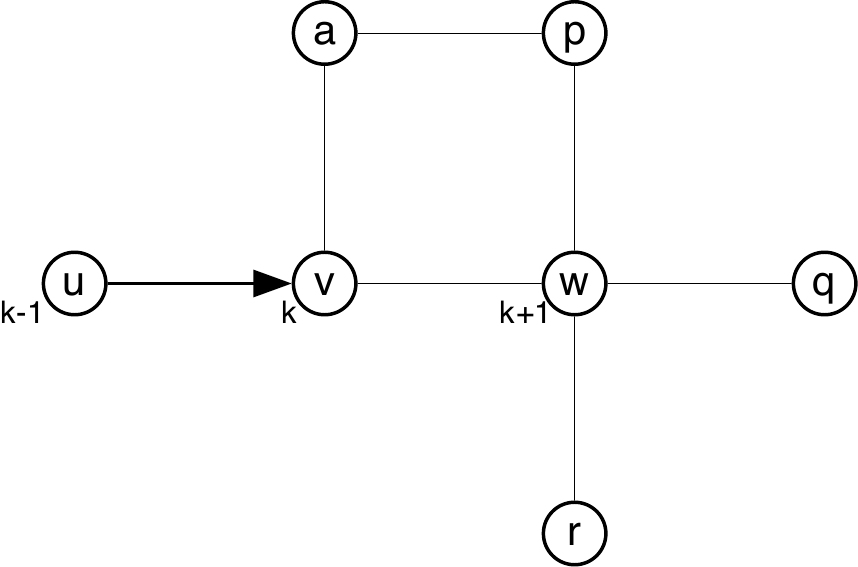}
\end{center}

We first consider $p$ as the BFS parent of $w$.  Then $p$ must have been reached at BFS generation $k$.  Furthermore, if $p,v$ were both reached at BFS generation $k$ and $w$ was reached at BFS generation $k+1$, then $a$ must be in $D$ and have generation $k-1$, since $D$ is ball-like and distance is taxicab.  But $p$ as the BFS parent of $w$ contradicts the BFS $x_1, x_2, x_3, \hdots$ growth ordering implied by $u$ as the BFS parent of $v$, since $u$ is the BFS parent of $v$ instead of $a$.
The same argument prevents $r$ as the BFS parent of $w$.
Consider $q$ as the BFS parent of $w$.  Then $q$ and $v$ were reached at the same BFS generation while $w$ between them was reached at a later BFS generation.  This is impossible in our ball-like domain $D$ since BFS generations correspond to minimal taxicab distance.
The only remaining possibility is $v$ as the BFS parent of $w$.  This is a valid cousin tree relationship.

Suppose node $a$ was reached at generation $k-1$.

\begin{center}
	\includegraphics[width=.3\linewidth]{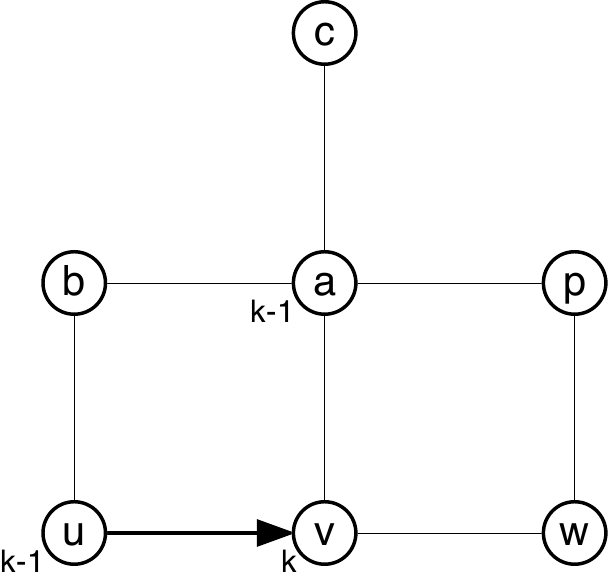}
\end{center}

The inductive hypothesis tells us that $a$ has cousin tree relationships to all its grid neighbors, including $v$.

Suppose node $a$ was reached at generation $k$.

\begin{center}
	\includegraphics[width=.3\linewidth]{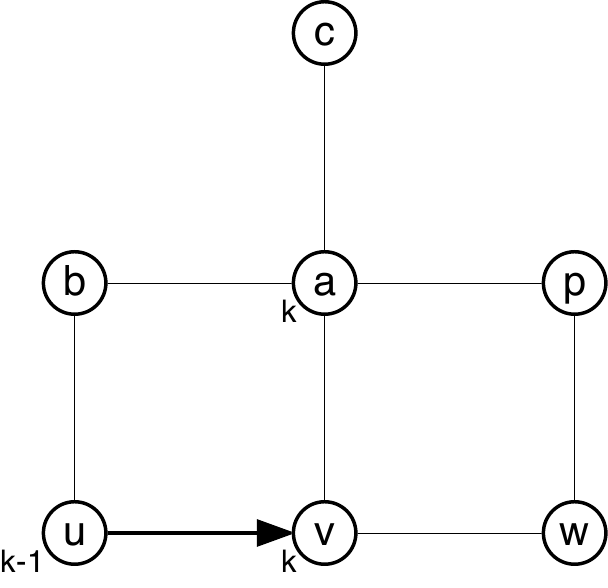}
\end{center}

This too is impossible since $v,a$ are neighbors and cannot have the same taxicab distance to the BFS root in our ball-like domain $D$.

Suppose node $a$ was reached at generation $k+1$.

\begin{center}
	\includegraphics[width=.3\linewidth]{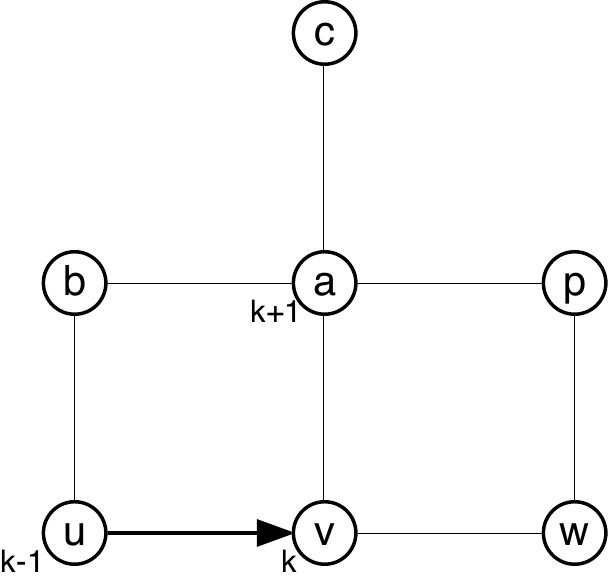}
\end{center}

If node $p$ is the BFS parent of $a$, then $p$ was reached at BFS generation $k$.  We then have the following diagram.

\begin{center}
	\includegraphics[width=.3\linewidth]{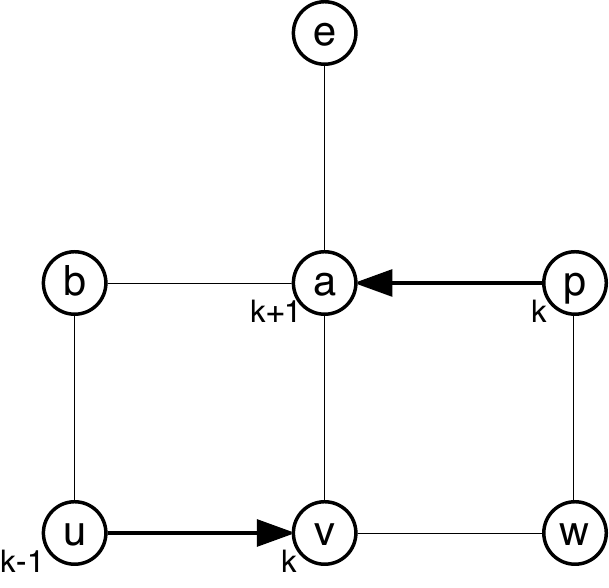}
\end{center}

But nodes $v$ and $p$ cannot have the same taxicab distance (as evidenced by their BFS generations) to the BFS root in ball-like $D$ if nodes $a$ and $u$ differ in taxicab distance to the BFS root by 2.
The same argument prevents $e$ from being the BFS parent of $a$.
The only remaining possibilities are
	$v$ as the BFS parent of $a$
and
	$b$ as the BFS parent of $a$,
both of which have valid cousin tree relationships with $v$.

The analysis of the relationship between node $v$ and node $c$ follows exactly the same argument as between nodes $v$ and $a$.

	\begin{figure}[h]
	\centering

	\includegraphics[width=.8\linewidth]{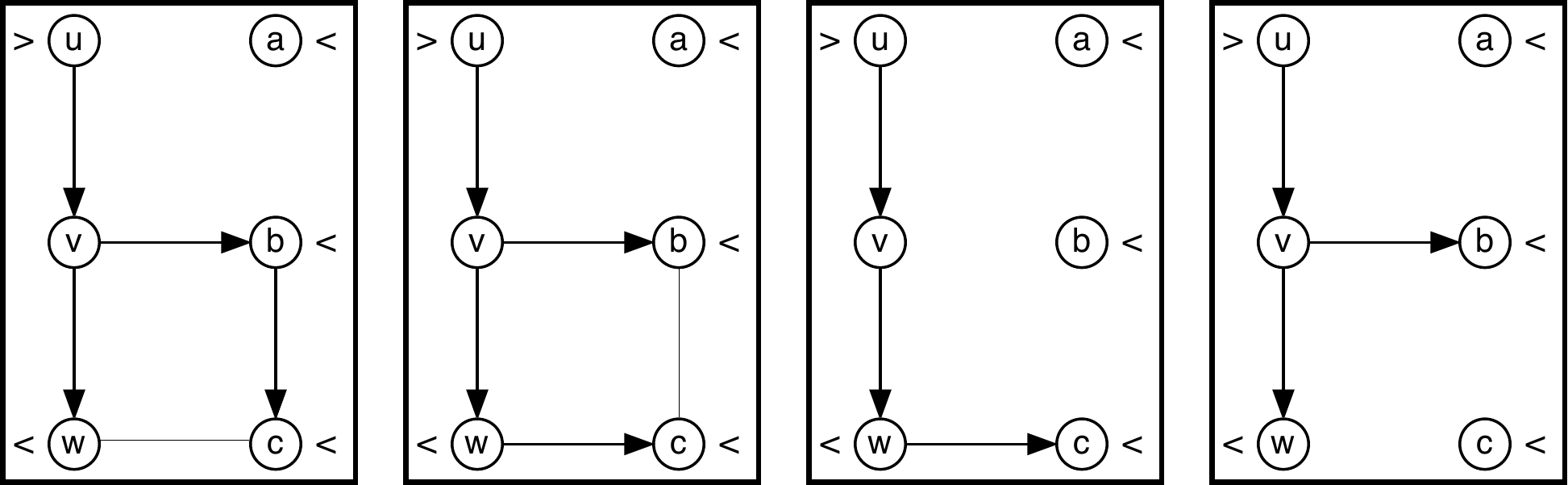}\hfill
	\includegraphics[width=.8\linewidth]{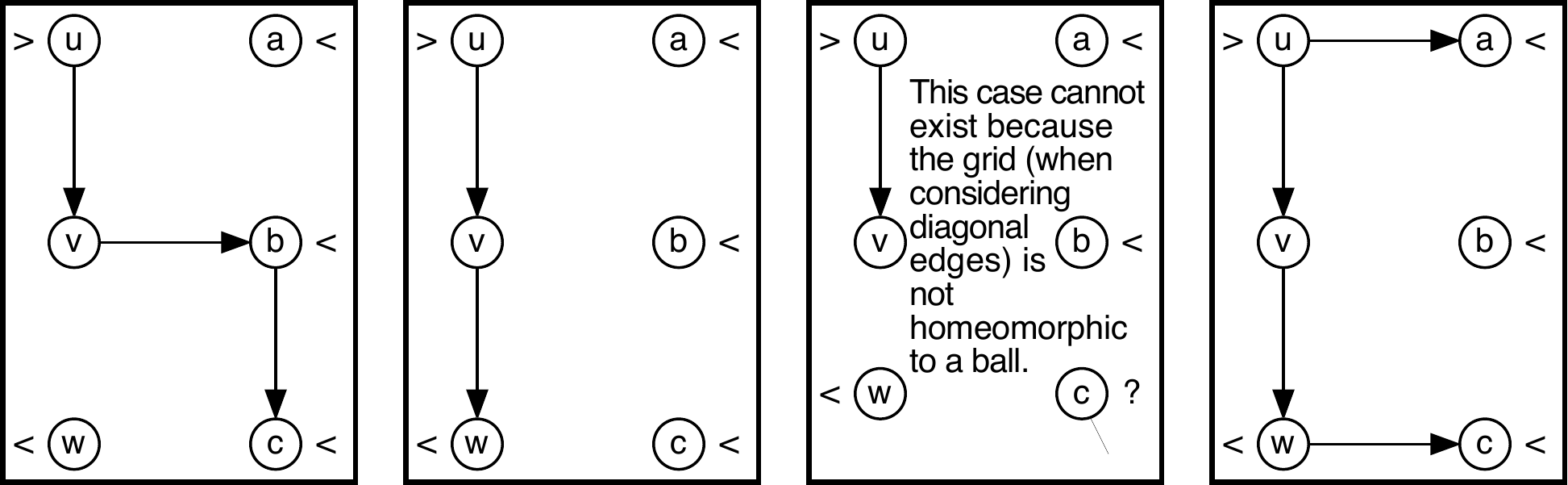}
	\hfill
	\includegraphics[width=.8\linewidth]{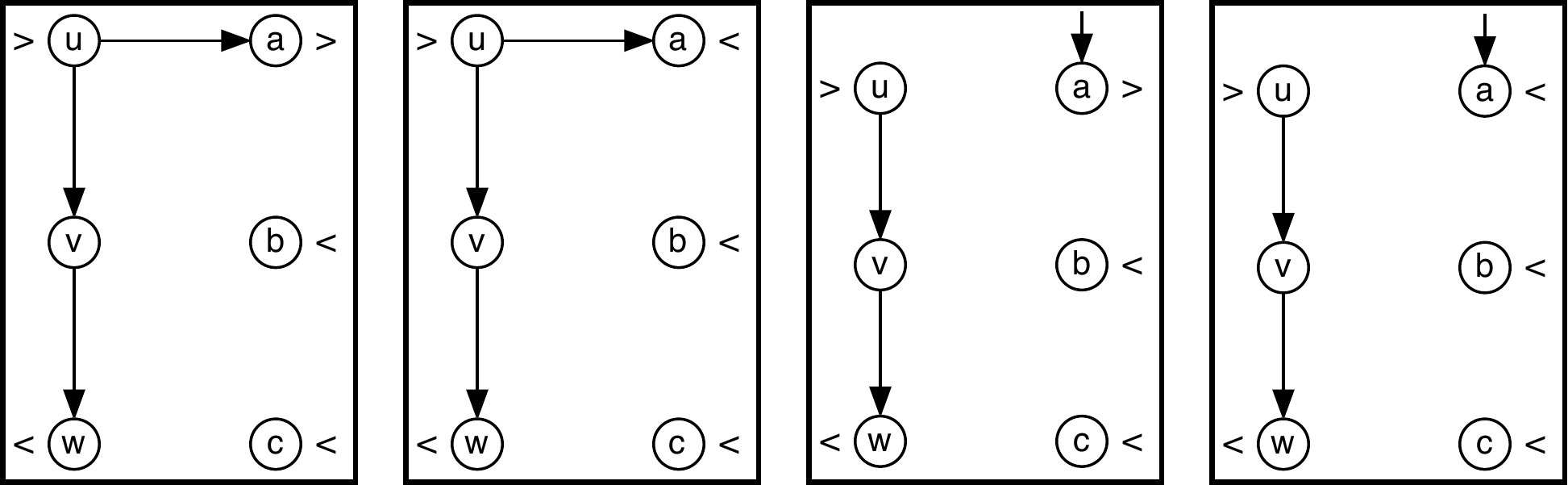}
	\hfill
	\includegraphics[width=.8\linewidth]{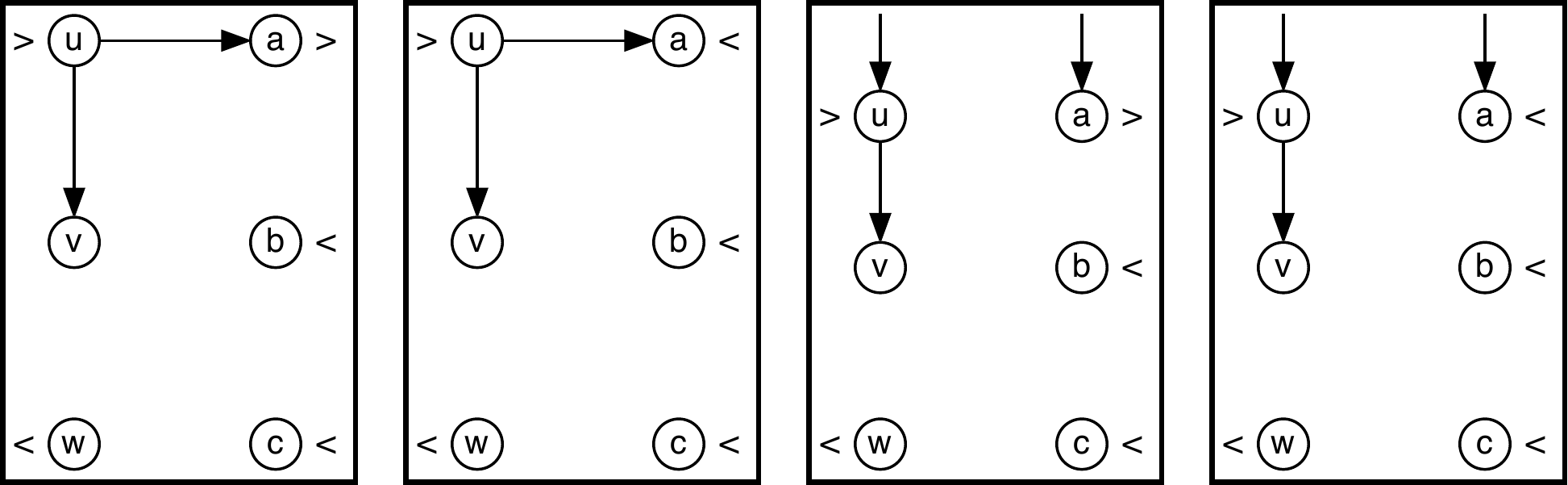}
	\hfill
	\includegraphics[width=.8\linewidth]{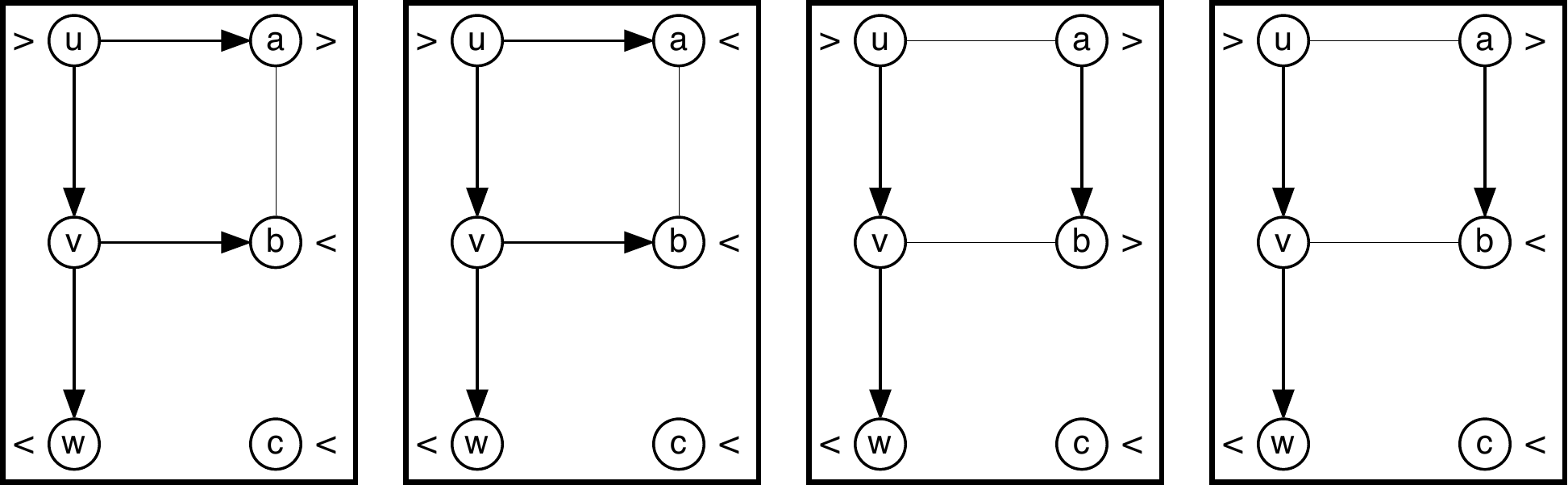}
	\hfill
	\includegraphics[width=.8\linewidth]{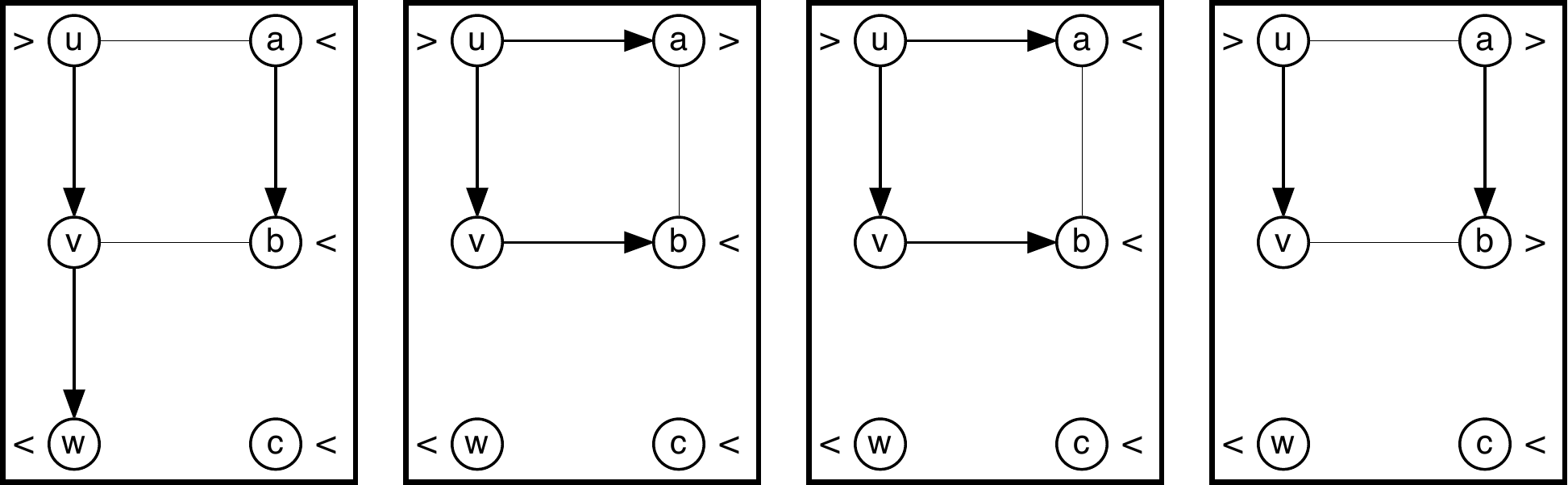}
	\hfill
	\includegraphics[width=.8\linewidth]{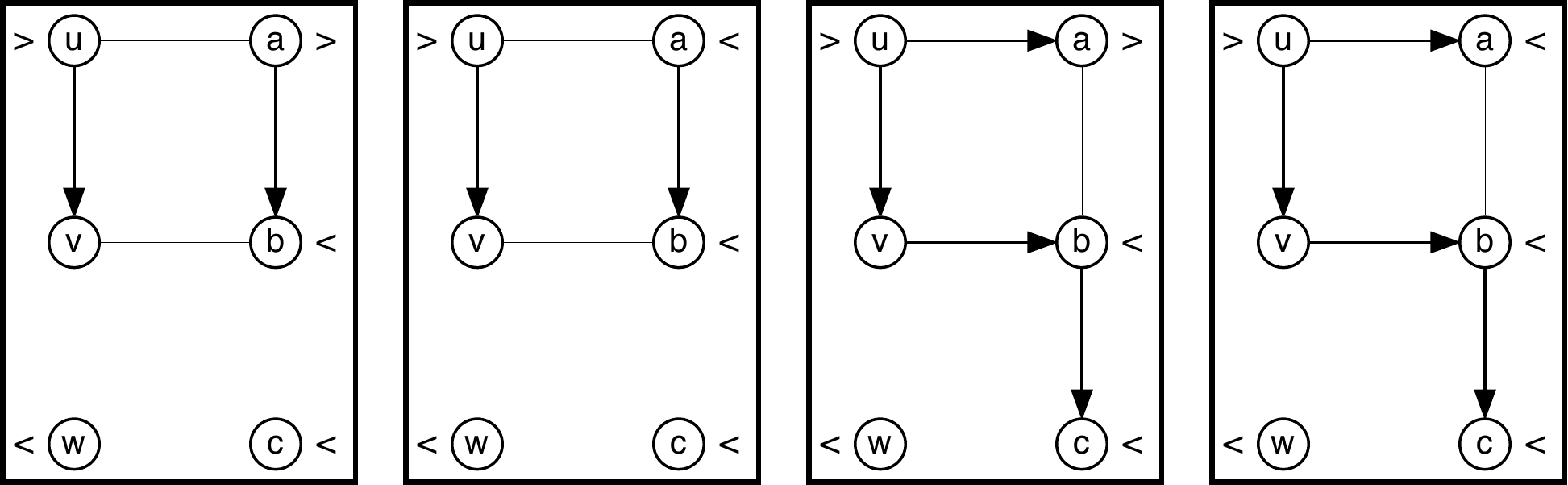}
	\hfill
	\includegraphics[width=.8\linewidth]{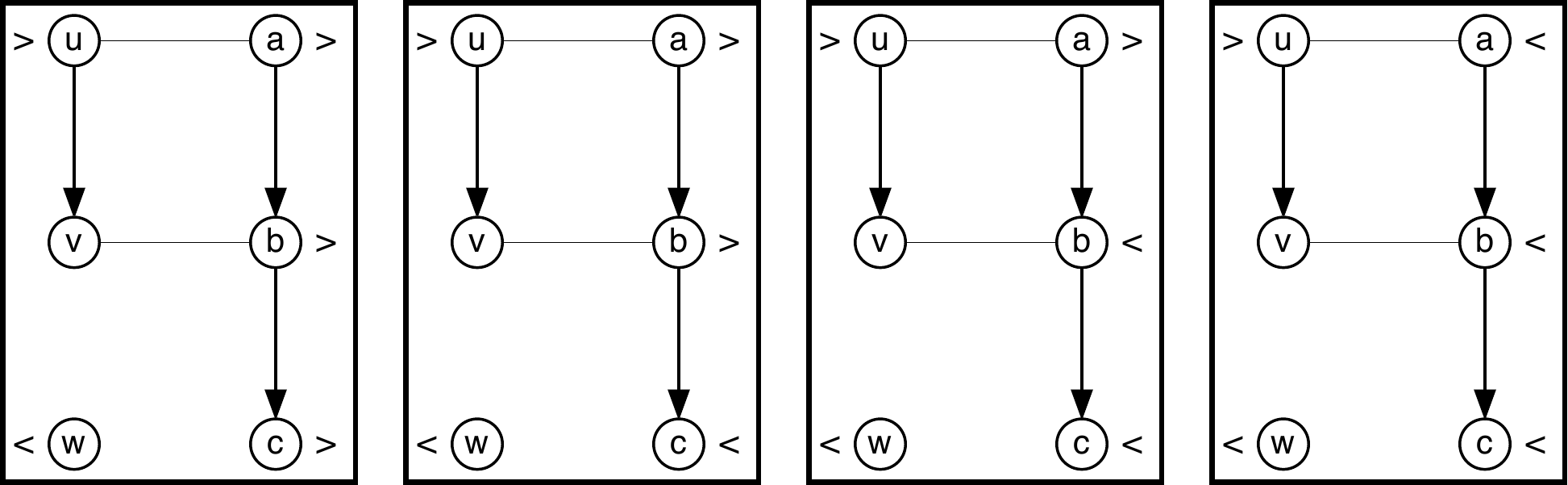}
	\hfill
	\includegraphics[width=.2\linewidth]{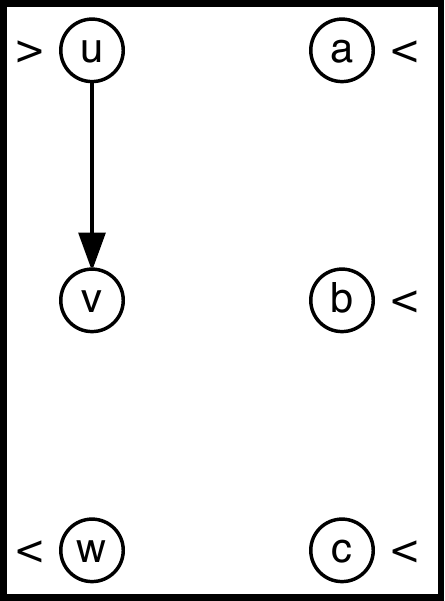}

	\caption[Cousin tree configurations for a node on the boundary.]{Cousin tree configurations for a node on the boundary: $a$ outside (2 cases); $a,b$ outside; $a,c$ outside; $a,w$ outside; $a,b,c$ outside; $a,b,w$ outside (violates assumption); $b$ outside; $b,c$ outside (4 cases); $b,c,w$ outside (4 cases); $c$ outside (5 cases); $c,w$ outside (5 cases); $w$ outside (6 cases); $a,b,c,w$ outside.}
	\label{fig:cousin_boundary_conditions}
	\end{figure}
\end{proof}

\end{document}